\newtheorem{theorem}{Theorem}
\newtheorem{assumption}{Assumption}
\theoremstyle{plain}
\newtheorem{lemma}{Lemma}
\theoremstyle{definition}
\newtheorem{definition}{Definition}
\newtheorem{example}{Example}
\theoremstyle{remark}
\newcommand{\R}{\mathbb{R}}
\def\endthebibliography{%
	\def\@noitemerr{\@latex@warning{Empty 'thebibliography' environment}}%
	\endlist
}
\title{\LARGE \bf
On the Convergence of the Backward Reachable Sets of Robust Controlled Invariant Sets For Discrete-time Linear Systems
}
\author{Zexiang Liu, Necmiye Ozay
	\thanks{Zexiang Liu and Necmiye Ozay are with the Dept. of Electrical Engineering and Computer Science, Univ. of Michigan, Ann Arbor,	MI 48109, USA 
		{\tt\small zexiang,necmiye@umich.edu}. This work was supported by NSF Grant CCF-1918123.
	}
}
\begin{document}
	\maketitle
	\thispagestyle{empty}
	\pagestyle{empty}

	\begin{abstract}
		This paper considers discrete-time linear systems with bounded additive disturbances, and studies the convergence properties of the backward reachable sets of robust controlled invariant sets (RCIS). Under a simple condition, we prove that the backward reachable sets of an RCIS are guaranteed to converge to the maximal RCIS in Hausdorff distance, with an exponential convergence rate. When all sets are represented by polytopes, this condition can be checked numerically via a linear program. We discuss how the developed condition generalizes the existing conditions in the literature for (controlled) invariant sets of systems without disturbances (or without control inputs).
	\end{abstract}
	
\section{Introduction}	
Robust controlled invariant sets (RCISs) have a wide range of applications in control and verification for safety-critical systems, such as autonomous vehicles\cite{kianfar2013safety,falcone2011predictive,nilsson2015correct,chou2018using}, robotic systems\cite{singletary2019online,ames2019control} and power systems\cite{jang2021large}. 
In many lines of works regarding RCIS, the backward reachable sets of RCISs need to be computed implicitly or explicitly. For instance, the computation of the maximal RCIS is known to be a difficult problem \cite{bertsekas1972infinite, de2004computation, rungger2017computing}. A common means to efficiently inner-approximate the maximal RCIS is by computing the backward reachable sets of a small but easy-to-compute RCIS\cite{de2004computation, rungger2017computing,fiacchini2017computing,wintenberg2020implicit,anevlavis2021enhanced}, which is referred to as the \emph{inside-out} algorithm in this work.
 In the context of the constrained model predictive control (MPC), the terminal set of the MPC is typically chosen as an RCIS to guarantee constraint satisfaction and recursive feasibility\cite{blanchini2008set,limon2005enlarging,jalalmaab2017guaranteeing}. Then, the domain of attraction (DoA) of the MPC is exactly equal to the $T$-step backward reachable set of the terminal set, where $T$ is the prediction horizon\cite{limon2005enlarging, rakovic2009set}. More recently, inspired by MPC, a control framework called safety filter is proposed for discrete-time systems \cite{wabersich2018linear} and continuous-time systems \cite{singletary2019online,gurriet2020scalable}  to equip a given nominal controller with safety guarantees in a minimally invasive way. In discrete-time setup, the connection between the DoA of a safety filter and the backward reachable sets of RCISs are similar to that in MPC. In continuous-time setup, a finite-step backward reachable set of some given RCIS is constructed implicitly, to improve the scalability and reduce the conservatism of the safety filter\cite{singletary2019online,gurriet2020scalable}.

Due to the role of the backward reachable sets of RCISs in all the aforementioned works, it is important to understand the convergence properties of  the backward reachable sets of RCISs, including (i) what conditions ensure that the backward reachable sets of an RCIS converges to the maximal RCIS and (ii) how fast the backward reachable sets of an RCIS converges.  In the literature, the existing works in this direction can only deal with systems without disturbances: Ahmadi and Gunluk \cite{ahmadi2018robust} study asymptotically stable autonomous systems and provide a sufficient condition under which the $k$-step backward reachable set of an invariant set converges to the maximal invariant set in finite steps. The papers \cite{cwikel1986convergence, gutman1987algorithm, de2004computation,darup2014general} study controllable systems without disturbances and show a sufficient condition under which the $k$-step backward reachable set of a controlled invariant set (CIS) converges to the maximal CIS in Hausdorff distance, with an exponential convergence rate. When the system is asymptotically stabilizable without disturbances, Santis et al. \cite{de2004computation} present a sufficient condition under which the backward reachable sets of a small CIS converges to the maximal CIS, but does not reveal the convergence rate. 

As the main contribution of this work, for linear systems with both control and disturbance inputs, we develop a mild sufficient condition under which the $k$-step backward reachable set of an RCIS converges to the maximal RCIS in Hausdorff distance, with an exponential convergence rate.   Furthermore, when all sets are represented by polytopes, this sufficient condition can be easily checked by a linear program. To the best of our knowledge, our work is the first result in the literature that shows these convergence properties for systems with additive disturbances. 
When restricted to systems without disturbances, the existing results in \cite{gutman1987algorithm, de2004computation,darup2014general, cwikel1986convergence} are shown to be special cases of our result. In addition, our result extends the results for asymptotically stabilizable systems in \cite{de2004computation} by showing the convergence rate is exponential.

In the remainder of this work, the preliminaries of RCISs are introduced in Section \ref{sec:prelim}. Then, our main result is stated in Section \ref{sec:main} and proven in Section \ref{sec:pf}. After that, we compare our result with existing results in the literature in Section \ref{sec:dis} and conclude the paper in Section \ref{sec:con}. The proofs of all the minor theorems are found in Appendix. 

\noindent \textbf{Notation:} The multiplication $AX$ of a matrix $A$ and a set $X$ is defined by ${AX = \{Ax \mid x\in X\}}$. The projection of a set  $X \subseteq \R^{n+m}$ onto the first $n$ coordinates is denoted by $\pi_{[1,n]}(X)=\{x_1\in \R^{n} \mid (x_1,x_2)\in X\} $. We define the distance from a point $x$ to a set $Y$ by $d(x,Y) = \inf\{ \Vert x-y\Vert_{2} \mid y\in Y\}$, and then define the Hausdorff distance of two sets $X$ and $Y$ by $d(X,Y) = \max(\sup_{x\in X}d(x,Y), \sup_{y\in Y}d(y,X))$. The Minkowski sum of two sets $X$ and $Y$ is denoted by  $X+Y=\{x+y \mid x\in X, y\in Y\} $. When $Y=\{y\} $ is a singleton, the sum $X+Y$ is written as $X+y$ for short. Also, for a set $X$ and a point $y$, $X-y = X+(-y)$. Given a scalar $ \alpha$ and a set $X$, the set $\alpha X = \{ ax \mid x\in X\} $.  For a subset $X$ of  $\R^{n}$, the interior and closure of $X$ are denoted by $int(X)$ and $cl(X)$. The $ \epsilon$-ball centered at $x$ is denoted by $B_{ \epsilon}(x) = \{y \mid \Vert x-y\Vert_2 < \epsilon\} $. 
\section{Preliminaries} \label{sec:prelim} 
We consider a discrete-time linear system $\Sigma$ in form of 
\begin{align} \label{eqn:sys} 
\Sigma:   x(t+1) = A x(t) + Bu(t) + Ed(t),
\end{align}
with $x\in \R^{n}$, $u\in \R^{m}$, and $d\in D \subseteq\R^{l}$. The state-input constraints of the system $\Sigma$ are specified by the \emph{safe set} $S_{xu}$, which contains all the desired state-input pairs $(x,u)$.  

\begin{assumption} \label{asp:cpt} 
The safe set $S_{xu}$ and the disturbance set $D$ are convex and compact.
\end{assumption}

\begin{definition} \label{def:rcis} 
    A set $C \subseteq \R^{n}$ is a \emph{robust controlled invariant set} (RCIS) of the system $\Sigma$ in the safe set $S_{xu}$ if for all  $x\in C$, there exists $u\in \R^m$ such that  ${(x,u)\in S_{xu}}$ and ${Ax+Bu+ ED \subseteq C}$. 

	An RCIS $C_{max}$ is the maximal RCIS of $\Sigma$ in $S_{xu}$ if $C_{max}$ contains any RCIS of $\Sigma$ in $S_{xu}$. 
\end{definition}
If the disturbance set $D=\{0\}$, a set $C$ satisfying conditions in Definition \ref{def:rcis} is just called \emph{controlled invariant set} (CIS) by convention.

\begin{definition} \label{def:pre} 
Given a set $X \subseteq \R^{n}$, the \emph{one-step backward reachable set} $Pre_{\Sigma}(X,S_{xu},D)$ of $X$ with respect to the system $\Sigma$, the safe set  $S_{xu}$ and the disturbance set $D$ is the set of states $x$ where there exists an input $u$ such that the state-input pair $(x,u)$ is in the safe set $S_{xu}$ and the next state stays in $X$ robust to any disturbance in $D$, that is
\begin{align} \label{eqn:pre} 
    Pre_{\Sigma}(X,S_{xu},D) = \{x \mid \exists u, (x,u)\in S_{xu}, Ax+Bu+ED \subseteq X\}.
\end{align}
\end{definition}
By \cite[Proposition 1]{rakovic2006reachability}, a set $C$ is an RCIS of $\Sigma$ in $S_{xu}$ if and only if
\begin{align} \label{eqn:rcis_pre} 
    C \subseteq Pre_{\Sigma}(C,S_{xu},D).
\end{align}
Furthermore, if $C$ is an RCIS of $\Sigma$ in $S_{xu}$, then $Pre_{\Sigma}(C, S_{xu},D)$ is also an RCIS of $\Sigma$ in $S_{xu}$. Thus, a set  $C_{max}$ is the maximal RCIS only if
\begin{align} \label{eqn:mrcis_pre} 
    C_{max} = Pre_{\Sigma}(C_{max}, S_{xu},D).
\end{align}
Given Definition \ref{def:pre}, we define the \emph{$k$-step backward reachable set} $Pre^{k}_{\Sigma}(C_0,S_{xu},D)$ of a target set $C_0$ recursively by
\begin{align}
	\begin{split} \label{eqn:k_pre} 
   Pre_{\Sigma}^{1}(C_0,S_{xu},D) &=  Pre_{\Sigma}(C_0, S_{xu}, D),\\
Pre_{\Sigma}^{k}(C_0,S_{xu},D) &= Pre_{\Sigma}(Pre_{\Sigma}^{k-1}(C_0,S_{xu},D), S_{xu}, D).
	\end{split}
\end{align}
When $C_0$, $S_{xu}$ and $D$ are clear in context, we denote $Pre_{\Sigma}^{k}(C_0,S_{xu},D)$ by $C_{k}$ for short. 
Based on different choices of $C_0$, there are two standard algorithms that compute approximations of the maximal RCIS: 

\noindent \textbf{The outside-in algorithm}\cite{bertsekas1972infinite, rungger2017computing}: Let $C_{0} = \pi_{[1,n]}(S_{xu})$ be the projection of the safe set $S_{xu}$ onto the $x$ coordinates. Recursively compute $C_{k}$ in \eqref{eqn:k_pre} until the termination condition $C_{k}=C_{k-1}$ is satisfied.

\noindent \textbf{The inside-out algorithm}\cite{de2004computation,rungger2017computing}: Let $C_{0}$ be a known RCIS of $\Sigma$ in $S_{xu}$. Recursively compute $C_{k}$ in \eqref{eqn:k_pre} until the termination condition $C_{k}=C_{k-1}$ is satisfied or  $k$ reaches a user-defined maximum step $k_{max}$. 

The $k$-step backward reachable set $C_{k}$ in the outside-in algorithm outer-approximates the maximal RCIS\cite{bertsekas1972infinite}, while the set $C_{k}$ in the inside-out algorithm inner-approximates the maximal RCIS. 
Both algorithms may fail to terminate within finite steps\cite{rungger2017computing}. However, when prematurely terminated at step $k$, the set $C_{k}$ in the outside-in algorithm is not necessarily robust controlled invariant, and thus is less useful in control synthesis; in contrast, the set $C_{k}$ in the inside-out algorithm is always an RCIS. Due to this reason, the inside-out algorithm is also called an \emph{anytime} algorithm \cite{rungger2017computing}, as users can stop the algorithm at any step $k$ and use the RCIS $C_{k}$ in control synthesis. 

The inside-out algorithm is computationally more appealing, but the convergence properties of the inside-out algorithm are poorly studied, which motivates our work. 
Thus, for the remainder of this work, if not stated otherwise, we assume $C_0$ is an RCIS of $\Sigma$ within the safe set $S_{xu}$, which is the setup for the inside-out algorithm.

Given $C_0$ is an RCIS, the set $\{C_{k}\}_{k=0}^{ \infty}$ of finite-step backward reachable sets of $C_0$ is an expanding family of RCISs. That is,  for all $k\geq 0$, $C_{k}$ is robust controlled invariant in $S_{xu}$, and $C_{k+1} \supseteq C_{k}$. Thus, the limit $C_{ \infty}$ of the $k$-step backward reachable set of $C_0$ as $k$ goes infinity is well defined:
\begin{align} \label{eqn:C_inf} 
    C_{ \infty} = \cup_{k=0}^{ \infty} C_{k}.
\end{align}
In this work, we also refer to the limit $C_{ \infty}$ as the \emph{infinite-step} backward reachable set of $C_0$. 
It can be shown that  $C_{ \infty}$ is an RCIS within $S_{xu}$.  

Formally, in this work, we want to answer the following two questions: 

\noindent\fbox{%
    \parbox{0.47\textwidth}{%
    \emph{(i) What condition ensures that the $k$-step backward reachable set $ C_{k}$ converges to the maximal RCIS $C_{max}$ in Hausdorff distance, that is, $d(C_{ \infty}, C_{max}) = 0$? 
    (ii) How fast does the $k$-step backward reachable set $C_{k}$ converges to its limit $C_{ \infty}$?}
}}

\section{Main Result} \label{sec:main} 
Before we state our main result, let us first gain some intuitions from a toy example.

\begin{example} \label{ex:1} 
Consider the $1$-dimensional system
\begin{align} \label{eqn:sys_1d} 
    x(t+1)= \alpha x(t) + u(t)+d(t),
\end{align}
with $x\in \R$, $u\in \R$, and $d\in[-d_{max},d_{max}]$. The safe set is ${S_{xu}=[-x_{max},x_{max}]\times [-u_{max},u_{max}]}$. Define 
\begin{align}
c_d = (d_{max}-u_{max})/(1- \vert \alpha \vert ).
\end{align}
Consider symmetric RCIS $C_0 $ in form of $ [-c_0,c_0]$ with $c_0 \leq x_{max}$. Then, the $k$-step backward reachable set $C_{k}$ of $C_0$ is symmetric, and if $ \vert \alpha \vert\not\in \{0,1\}$, $C_k$ is equal to $[-c_{k},c_{k}]$ with
\begin{align} \label{eqn:ck_1d} 
   c_{k} =  \min\left(\frac{c_0-c_d}{ \vert \alpha \vert^{k}}+ c_d,\ x_{max}\right).
\end{align}

\underline{Case $1$}: Suppose $ \vert\alpha \vert\in (0,1)$, $u_{max}\leq d_{max}$ and ${x_{max} > c_d\geq d_{max}}$. For any $c_0\in [ c_d, x_{max}]$, $C_0= [-c_0,c_0]$ is an RCIS. The maximal RCIS is $[-x_{max},x_{max}]$. 

According to \eqref{eqn:ck_1d}, if we select $c_0\in (c_d, x_{max})$, there always exists a finite $k$ such that $c_{k}$ is equal to $x_{max}$. That is, the $k$-step backward reachable set $C_{k}$ converges to the maximal RCIS $[-x_{max}, x_{max}]$ in finite steps. 
However,  if we select $c_0=c_d$, then $c_{k}= c_{0} < x_{max}$ for all $k\geq 0$. That is, $C_{k}=C_{0}$ fails to converge to the maximal RCIS. 

\underline{Case 2}: Suppose that $ \vert \alpha \vert >1$, $u_{max}\geq d_{max}$, and ${x_{max} \geq  c_d\geq d_{max}}$. For any $c_0\in [d_{max}, c_d]$, $C_0 = [-c_0,c_0]$ is an RCIS. The maximal RCIS is $[-c_d, c_d]$. 

According to \eqref{eqn:ck_1d}, if we select any $c_0\in [d_{max}, c_d)$, $c_{k}$ converges to $c_d$ in the limit. That is, the $k$-step backward reachable set $C_{k}$ converges to the maximal RCIS in Hausdorff distance in infinite steps. The limit $C_{ \infty}=(-c_d,c_d)$ is the interior of the maximal RCIS. Furthermore, the Hausdorff distance between  $C_{k}$ and the maximal RCIS is
\begin{align} \label{eqn:conv_rate_1d} 
    d(C_{k}, [-c_d,c_d]) = \frac{c_d-c_0}{ \vert \alpha \vert^{k}}, 
\end{align}
which decays to $0$ exponentially fast.

\underline{Otherwise:} For all the other cases where the maximal RCIS is not empty, the behavior of $C_k$ is similar to Case 1. That is, $C_k$ is either equal to $C_0$ for all $k\geq 0$ or converges to the maximal RCIS in finite steps. 
\end{example}

In Example \ref{ex:1}, we observe three types of limit of $C_{k}$: (i) the limit $C_{ \infty}$ is exactly equal to the maximal RCIS $C_{max}$; (ii) $C_{ \infty}$ is a subset of $C_{max}$, but the Hausdorff distance $d(C_{ \infty}, C_{max}) = 0$; (iii) $C_{ \infty}$ is a subset of $C_{max}$ and the Hausdorff distance $d(C_{ \infty},C_{max}) >0$. Note that for the limit type (ii), even if $C_{ \infty} \subset C_{max}$, the maximal RCIS  $C_{max}$ is equal to the closure of $C_{ \infty}$  since $d(C_{ \infty}, C_{max}) = 0$. Thus, among the three types, the limit type (iii) is the least desirable one, as in this case it is impossible to obtain the maximal RCIS from the backward reachable sets of $C_0$. 

A key observation in Example \ref{ex:1} that distinguishes between the limit types (i), (ii) and the limit type (iii) is that if there exists a $k$ such that $C_{k}$ contains $C_0$ in the interior, then $C_{ \infty}$ is either in type (i) or in type (ii).  Indeed, in Example \ref{ex:1}, the limit type (iii) only happens when $C_{k} = C_0$ for all $k\geq 0$. Intuitively, this observation suggests that if the backward reachable sets of $C_0$ expand in all directions in $\R^{n}$, then they keep expanding until they reach the maximal RCIS (that is the limit types (i) and (ii)). In other words, the limit type (iii) occurs only if the backward reachable sets of $C_0$ only expand in certain directions, or do not expand at all (which happens in Example \ref{ex:1}).

In terms of the convergence rate, another interesting observation in Example \ref{ex:1} is that the Hausdorff distance between $C_{k}$ and the maximal $C_{max}$ decays to $0$ at least exponentially fast for the limit types (i) and (ii). 

Next, we state the main result of this work, which shows that the observations made in Example \ref{ex:1} actually hold for any linear systems of the form \eqref{eqn:sys}:
\begin{theorem} \label{thm:main} 
Under Assumption \ref{asp:cpt}, suppose that $C_0$ is a compact and convex RCIS of the system  $\Sigma$ in $S_{xu}$. If $C_0 $ is contained in the interior of $C_{k_{0}}$ for some $k_0>0$, then $C_{k}$ converges to the maximal RCIS $C_{max}$ in Hausdorff distance, and there exists integers $N_0 \geq 0$, $N>0$ and scalars $c>0$, $a\in (0,1)$ such that the Hausdorff distance between $C_{N_0+kN}$ and $C_{max}$ satisfies 
\begin{align} \label{eqn:d_bound} 
    d(C_{N_0+kN}, C_{max}) \leq c a^{k}.
\end{align}
That is, the Hausdorff distance $d(C_{N_0+kN}, C_{max})$ decays to $0$ exponentially fast as $k$ goes to infinity.
\end{theorem}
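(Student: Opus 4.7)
My approach combines the $\epsilon$-slack between $C_0$ and $C_{k_0}$ with a distributivity property of $Pre$ over Minkowski convex combinations. Since $C_0$ is compact and contained in the open set $int(C_{k_0})$, standard compactness yields $\epsilon > 0$ with $C_0 + B_\epsilon(0) \subseteq C_{k_0}$. Let $D := \max\{\Vert m - c\Vert : c \in C_0,\, m \in C_{max}\}$, which is finite since $C_0$ is compact by hypothesis and $C_{max}$ is compact and convex as the maximal RCIS inside the compact convex set $\pi_{[1,n]}(S_{xu})$. Set $\mu_0 := \epsilon/D$, assumed in $(0,1)$ (the edge case $\mu_0 \geq 1$ forces $C_{max} \subseteq C_{k_0}$ and convergence is immediate). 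The key initial inclusion is $C_{k_0} \supseteq (1-\mu_0) C_0 + \mu_0 C_{max}$: any element $(1-\mu_0)c + \mu_0 m$ equals $c + \mu_0(m-c)$ with $\Vert \mu_0(m-c)\Vert \leq \mu_0 D = \epsilon$, hence lies in $c + B_\epsilon \subseteq C_0 + B_\epsilon \subseteq C_{k_0}$.

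\textbf{Distributivity of $Pre$.} The central lemma is that for any sets $X_1, X_2 \subseteq \R^n$ and $\lambda \in [0,1]$, one has $Pre_\Sigma((1-\lambda)X_1 + \lambda X_2, S_{xu}, D) \supseteq (1-\lambda)\,Pre_\Sigma(X_1, S_{xu}, D) + \lambda\,Pre_\Sigma(X_2, S_{xu}, D)$. Indeed, given $y_i \in Pre_\Sigma(X_i, S_{xu}, D)$ with witnesses $u_i$ satisfying $(y_i, u_i) \in S_{xu}$ and $Ay_i + Bu_i + ED \subseteq X_i$, the combination $y := (1-\lambda)y_1 + \lambda y_2$ paired with $u := (1-\lambda)u_1 + \lambda u_2$ satisfies $(y,u) \in S_{xu}$ by convexity of $S_{xu}$, and for every $d \in D$ we have $Ay + Bu + Ed = (1-\lambda)(Ay_1 + Bu_1 + Ed) + \lambda(Ay_2 + Bu_2 + Ed) \in (1-\lambda)X_1 + \lambda X_2$, so $Ay + Bu + ED \subseteq (1-\lambda)X_1 + \lambda X_2$ and $y$ lies in the required preimage. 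Iterating, the compound operator $F := Pre^{k_0}_\Sigma$ enjoys the same distributivity.

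\textbf{Iteration and Hausdorff bound.} Applying $F$ to the inclusion from the first paragraph, using distributivity together with $F(C_{max}) = C_{max}$ (from $Pre_\Sigma(C_{max},S_{xu},D) = C_{max}$), one obtains $C_{2k_0} \supseteq (1-\mu_0) F(C_0) + \mu_0 F(C_{max}) = (1-\mu_0) C_{k_0} + \mu_0 C_{max} \supseteq (1-\mu_0)^2 C_0 + (1-(1-\mu_0)^2) C_{max}$, after substituting the initial inclusion once more. A straightforward induction yields $C_{k k_0} \supseteq (1-\mu_0)^k C_0 + (1-(1-\mu_0)^k) C_{max}$ for every $k \geq 0$. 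Because $C_{k k_0} \subseteq C_{max}$, the Hausdorff distance equals $\sup_{m \in C_{max}} d(m, C_{k k_0})$; for any $m \in C_{max}$ and any fixed $c \in C_0$, the point $(1-\mu_0)^k c + (1-(1-\mu_0)^k) m$ lies in $C_{k k_0}$ and is within $(1-\mu_0)^k \Vert m - c\Vert \leq (1-\mu_0)^k D$ of $m$. Hence $d(C_{k k_0}, C_{max}) \leq (1-\mu_0)^k D$, which is the claimed exponential bound with $N_0 = 0$, $N = k_0$, $c = D$, and $a = 1-\mu_0 \in (0,1)$.

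\textbf{Main obstacle.} The argument reduces to two short ingredients—extracting the $\epsilon$-slack (routine compactness) and the distributivity of $Pre$ over Minkowski convex combinations (a short but crucial lemma relying only on linearity of the dynamics and convexity of $S_{xu}$)—after which the proof is a clean induction. The only genuine subtlety is verifying that $C_{max}$ is itself convex so that the inclusion $(1-\mu_0)^k C_0 + (1-(1-\mu_0)^k) C_{max} \subseteq C_{max}$ used in the Hausdorff bound holds; this follows because under Assumption 1 the convex hull of any RCIS is again an RCIS (by the same convex-combination-of-controls construction as in the distributivity proof), forcing the maximal RCIS to coincide with its convex hull. The degenerate case $\mu_0 \geq 1$, corresponding to $\epsilon \geq D$, is immediate: then $C_{max} \subseteq C_{k_0} \subseteq C_{max}$ so $C_{k_0} = C_{max}$ and the bound holds trivially.
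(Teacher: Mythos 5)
Your proof is correct, and it takes a genuinely different --- and more direct --- route than the paper's. The shared ingredient is the superadditivity of $Pre_{\Sigma}$ over Minkowski convex combinations: your ``distributivity'' lemma is exactly the paper's Lemmas \ref{lem:pre_decomp}--\ref{lem:pre_decomp_cvx} with $a=b=\lambda$. The paper deploys this tool in a four-stage argument: reduce the hypothesis to $C_0\subseteq int(C_{\infty})$ (Theorem \ref{thm:C_inf}), prove $cl(C_{\infty})=C_{max}$ via an order-cancellation theorem for convex bodies (Theorem \ref{thm:C_max}), extract an $N$-step $\lambda$-contractive scaled copy $\gamma\, cl(C_{\infty})$ (Theorem \ref{thm:conv_inv_ext}), and convert contractivity into a geometric rate through the affine recursion $g$ (Theorem \ref{thm:conv_ext}). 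You instead iterate the single inclusion $C_{k_0}\supseteq(1-\mu_0)C_0+\mu_0 C_{max}$ using superadditivity together with the fixed-point identity $Pre_{\Sigma}(C_{max},S_{xu},D)=C_{max}$ and monotonicity of $Pre_{\Sigma}$, obtaining $C_{kk_0}\supseteq(1-\mu_0)^{k}C_0+(1-(1-\mu_0)^{k})C_{max}$ in one induction. This buys a lot: you never need the order-cancellation theorem, the $C_{\infty}$ machinery, or the translation of a stationary point to the origin (your argument uses only translation-covariant Minkowski combinations, never scalings of sets about $0$), and your constants $N_0=0$, $N=k_0$, $a=1-\epsilon/D$ are explicit. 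The paper's longer route buys the intermediate results of independent interest (the equivalence with $C_0\subseteq int(C_{\infty})$, the identification $cl(C_{\infty})=C_{max}$, the contractive-set characterization), though $cl(C_{\infty})=C_{max}$ also falls out of your bound. Three small repairs: (i) your $D$ clashes with the paper's disturbance set $D$; (ii) since $B_{\epsilon}(0)$ is the \emph{open} ball, a point $c+\mu_0(m-c)$ with $\Vert\mu_0(m-c)\Vert\leq\epsilon$ need not lie in $C_0+B_{\epsilon}(0)$ --- either halve $\epsilon$ or use that $C_{k_0}$ is closed (Lemma \ref{lem:c_inf}(a)), so $C_0+cl(B_{\epsilon}(0))\subseteq cl(C_0+B_{\epsilon}(0))\subseteq C_{k_0}$; (iii) the convexity of $C_{max}$ you flag as the main subtlety is in fact not needed where you invoke it: the only direction of $\alpha C+\beta C=(\alpha+\beta)C$ your induction uses is $\alpha C+\beta C\supseteq(\alpha+\beta)C$, which holds for arbitrary sets, and the containment $C_{kk_0}\subseteq C_{max}$ needed for the one-sided Hausdorff computation follows because backward reachable sets of an RCIS are themselves RCISs (stated after \eqref{eqn:rcis_pre}), not from that convex combination. (Convexity of $C_{max}$ is nonetheless true, as the paper notes in the proof of Theorem \ref{thm:C_max}.)
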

Note that the bound in \eqref{eqn:d_bound} only holds for indices $\{N_0+kN\}_{k=0}^{ \infty} $ increasing by an interval of $N$. But, due to the fact that $d(C_{k}, C_{max})$ is monotonically non-increasing over $k$, the inequality in \eqref{eqn:d_bound} implies for all $k\geq 0$,
\begin{align} \label{eqn:d_bound_2} 
    d(C_{N_0+k}, C_{max}) \leq c (a^{1/N})^{k-N+1} = (c a^{-1+1/N} )a^{k/N}. 
\end{align}
Hence, by Theorem \ref{thm:main} and \eqref{eqn:d_bound_2} , the $k$-step backward reachable set $C_{k}$ converge to the maximal RCIS $C_{max}$ in Hausdorff distance exponentially fast whenever $C_0$ is contained in the interior of $C_{k_0}$ for some $k_0>0$. This result validates the two key observations we have in Example \ref{ex:1}.

When $C_0$ and $C_{k}$ are represented by polytopes, the condition $C_0 \subseteq int(C_{k})$ can be numerically checked by a linear program\footnote{This is a variant of the standard polytope containment problem given the $H$-representations of two polytopes\cite{sadraddini2019linear}.}.  Suppose that  $C_0 = \{x \mid H_1 x \leq  h_1\} $ and $C_{k} = \{x \mid H_2 x \leq h_2\} $ for some $H_i$, $h_i$ in appropriate dimensions, $i=1$, $2$, and $x_{0}$ is any interior point of $C_{k}$. We construct the following linear program to check if $C_0 \subseteq int(C_{k})$:
\begin{align}
	\begin{split} \label{eqn:lp} 
\gamma^{*}	= \min_{ \gamma \geq 0, \Lambda} &~ \gamma\\
	\text{subject to } & \Lambda H_1 = H_2 \\
					   & \Lambda (h_1-H_1 x_0) \leq  \gamma (h_2-H_2 x_{0}).
	\end{split}
\end{align}
According to \cite[Lemma 1]{sadraddini2019linear},  $ \gamma^{*}$ in \eqref{eqn:lp} is less than $1$ if and only if $C_0 \subseteq int(C_{k})$ if and only if $k_0 \leq k$. 

Next, recall that $ C_{\infty} $ in \eqref{eqn:C_inf} is the limit of the $k$-step backward reachable set $C_k$ as $k$ goes to infinity. When $k_0$ in Theorem \ref{thm:main} exists, it is obvious that $C_0 $ is contained by the interior $ int(C_{ \infty})$ of $C_{ \infty}$, since $C_{k_0}$ is a subset of $C_{ \infty}$. But conversely, if $C_0 \subseteq int(C_{ \infty})$, does there always  exist a finite $k_0$ such that $C_0 \subseteq int(C_{k_0})$? It turns out that those two conditions are equivalent, shown by the following theorem:
\begin{theorem} \label{thm:C_inf} 
Under the same conditions of Theorem \ref{thm:main}, $C_{0}$ is contained in the interior of $C_{k_0}$ for some finite $k_0>0$ if and only if $C_0$ is contained in the interior of $C_{ \infty}$ in \eqref{eqn:C_inf}. 
\end{theorem}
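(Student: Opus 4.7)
The forward direction is immediate: if $C_0 \subseteq int(C_{k_0})$, then since $C_{k_0} \subseteq C_\infty$ and $int(C_{k_0})$ is open, $int(C_{k_0}) \subseteq int(C_\infty)$, so $C_0 \subseteq int(C_\infty)$. All the work is in the converse.

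For the converse, my plan is to reduce it to the elementary topological fact that a compact set covered by an increasing family of open sets is covered by one member of the family. Concretely, I would aim to show
\begin{align*}
int(C_\infty) \;=\; \bigcup_{k \geq 0} int(C_k).
\end{align*}
Since $C_k \subseteq C_{k+1}$ implies $int(C_k) \subseteq int(C_{k+1})$, the family $\{int(C_k)\}$ is an increasing family of open sets. Once the displayed equality is in hand, $C_0 \subseteq int(C_\infty)$ together with compactness of $C_0$ yields a finite $k_0$ with $C_0 \subseteq int(C_{k_0})$, which is exactly what we need.

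The key step, and the main obstacle, is proving the nontrivial inclusion $int(C_\infty) \subseteq \bigcup_k int(C_k)$. My plan for this step relies on the convexity of each $C_k$, which I would first establish by induction on $k$: the map $X \mapsto Pre_\Sigma(X, S_{xu}, D)$ preserves convexity because it can be rewritten (via the Pontryagin difference) as a projection of a convex set, using Assumption~\ref{asp:cpt}. Given convexity, for any $x \in int(C_\infty)$ choose $\epsilon > 0$ with $B_\epsilon(x) \subseteq C_\infty$, and pick finitely many points $y_0, \ldots, y_n$ (for instance, the vertices of a small simplex around $x$) such that $B_{\epsilon/2}(x) \subseteq \mathrm{conv}\{y_0, \ldots, y_n\} \subseteq B_\epsilon(x) \subseteq C_\infty$. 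Each $y_i$ lies in some $C_{k_i}$; taking $k^\star = \max_i k_i$ and using monotonicity of the family, all $y_i$ lie in the convex set $C_{k^\star}$, so $\mathrm{conv}\{y_0, \ldots, y_n\} \subseteq C_{k^\star}$ and hence $x \in int(C_{k^\star})$.

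Putting the pieces together, the final step is standard: since $C_0$ is compact by the hypotheses of Theorem~\ref{thm:main} and is contained in the open cover $\bigcup_k int(C_k)$ by nested open sets, some single $int(C_{k_0})$ already contains $C_0$. The potentially delicate point I want to flag is the convexity argument: one must verify that the $Pre$ operator applied to a convex compact set again yields a convex (and compact) set, which is where Assumption~\ref{asp:cpt} and the linearity of $\Sigma$ are essential; without convexity, the simplex trick fails and the inclusion $int(C_\infty) \subseteq \bigcup_k int(C_k)$ need not hold.
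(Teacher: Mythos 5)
Your proof is correct, and it reaches the same key identity as the paper --- namely $int(C_\infty)=\bigcup_k int(C_k)$ --- but by a genuinely different and somewhat more elementary route. The paper first thickens $C_0$ to a closed set $cl(C_0+B_\epsilon(0))\subseteq int(C_\infty)$ (Lemma~\ref{lem:c_inf}(d)), then invokes Lemma~\ref{lem:h_conv}, whose proof identifies $\bigcup_i int(C_i)$ as an open convex set having the same closure as $C_\infty$ and uses the fact that a convex set with nonempty interior has the same interior as its closure; finite subcovering of the thickened set then yields a single $C_k$ containing $C_0+B_\epsilon(0)$, hence $C_0\subseteq int(C_k)$. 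You instead prove the inclusion $int(C_\infty)\subseteq\bigcup_k int(C_k)$ pointwise: enclose each $x\in int(C_\infty)$ in the interior of a small simplex inside $C_\infty$, push the finitely many vertices into a single $C_{k^\star}$ by monotonicity, and use convexity of $C_{k^\star}$. This is essentially the same trick the paper uses locally in Lemma~\ref{lem:c_inf}(c) to get a nonempty interior, promoted to the full covering identity; it lets you apply compactness directly to $C_0$ rather than to a thickened closed neighborhood, and it dispenses with the closure manipulations and with boundedness of $C_\infty$. Both arguments ultimately rest on the same ingredient, which you correctly flag as the delicate point: that $Pre_\Sigma(\cdot,S_{xu},D)$ maps compact convex sets to compact convex sets under Assumption~\ref{asp:cpt} (the paper's Lemma~\ref{lem:pre_cpt}, proved exactly as you sketch, as a projection of a convex compact constraint set). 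Your treatment of the forward direction matches the paper's, which dismisses it as trivial.
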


Finally, the readers may wonder what happens if $k_0$ in Theorem \ref{thm:main}  does not exist, or equivalently $C_0 \not \subseteq int(C_{ \infty})$. In Example \ref{ex:1}, the limit type (iii) occurs when $k_0$ does not exist. However, there are examples where the limit types (i) and (ii) occur even if a finite $k_0$ does not exist. Thus, the existence of $k_0$ is only a sufficient condition for the convergence of the backward reachable set  $C_{k}$ to the maximal RCIS. When the system is disturbance-free (that is $D=\{0\} $), the existence of $k_0$ is guaranteed if $(A,B)$ is asymptotically stable and $0\in int(C_0)$, or if $(A,B)$ is controllable and $0\in C_0$ and $0\in int(S_{xu})$. A more in-depth discussion for results of disturbance-free systems can be found in Section \ref{sec:dis}. 
\section{Proof of the main result} \label{sec:pf} 
In this section, we present the main ideas in the proof of Theorem \ref{thm:main}. 
First, according to a fixed-point theorem \cite[Theorem 12]{caravani2002doubly}\footnote{See Appendix \ref{sec:feq} for a detailed discussion.}, given a compact convex RCIS $C_0$ in $S_{xu}$, there always exists a stationary point $(x_{e},u_{e},d_{e})\in S_{xu}\times D$ such that $x_{e}\in C_0$ and $Ax_{e}+Bu_{e}+Ed_{e}= x_{e}$.
Without loss of generality, we assume that the stationary point $(x_{e},u_{e},d_{e})$ is the origin of the state-input-disturbance space and thus $0\in C_0$ and $0\in S_{xu}\times D$ for the remainder of this section. 
Also, by Theorem \ref{thm:C_inf}, we convert the condition on the existence of $k_0$ in Theorem \ref{thm:main} into the following equivalent assumption.
\begin{assumption} \label{asp:int} 
   The set $C_0$ is a compact convex RCIS of $\Sigma$ in $S_{xu}$ and is contained by the interior $int(C_{ \infty})$ of the infinite-step backward reachable set $C_{ \infty}$ in \eqref{eqn:C_inf}. 
\end{assumption}
Based on Assumption \ref{asp:int}, the proof of Theorem \ref{thm:main} contains two steps: The first step is to show that the closure $cl(C_{ \infty})$ of the limit $C_{ \infty}$ is equal to the maximal RCIS $C_{max}$, that is to prove the following theorem:    
\begin{theorem} \label{thm:C_max} 
	Under Assumptions \ref{asp:cpt} and \ref{asp:int}, the closure of $C_{ \infty}$ in \eqref{eqn:C_inf} is the maximal RCIS $C_{max}$, that is $cl(C_{ \infty}) = C_{max}$.
\end{theorem}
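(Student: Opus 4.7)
The plan is to prove $cl(C_{\infty}) = C_{max}$ by establishing the two inclusions separately. The easy direction $cl(C_{\infty}) \subseteq C_{max}$ is standard: I would first check that $cl(C_{\infty})$ is itself an RCIS. Given $x \in cl(C_{\infty})$ with $x_n \to x$ for some $x_n \in C_{\infty}$, each $x_n$ admits a witness $u_n$ with $(x_n, u_n) \in S_{xu}$ and $A x_n + B u_n + E D \subseteq C_{\infty}$. Compactness of $S_{xu}$ produces a limit $u$ that serves as a witness for $x$ relative to $cl(C_{\infty})$, so closure preserves robust controlled invariance, and then $cl(C_{\infty}) \subseteq C_{max}$ follows by maximality of $C_{max}$.

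For the harder direction $C_{max} \subseteq cl(C_{\infty})$, I would first record two consequences of Assumption \ref{asp:int} that I expect to use repeatedly. First, $C_{\infty}$ is open: by Theorem \ref{thm:C_inf} there exists $k_0$ with $C_0 \subseteq int(C_{k_0})$, and iterating the monotonicity-plus-continuity property $Pre(int(X)) \subseteq int(Pre(X))$ yields $C_k \subseteq int(C_{k+k_0})$ for every $k$, making $C_{\infty}$ a union of open sets. Second, $C_{\infty}$ is a fixed point of $Pre$: the inclusion $C_{\infty} \subseteq Pre(C_{\infty})$ is immediate, and for the reverse, any witness $u$ at $x \in Pre(C_{\infty})$ produces a compact image $Ax + Bu + ED$ covered by the increasing open family $\{int(C_k)\}$, which by compactness lies inside some $int(C_N) \subseteq C_N$, so $x \in Pre(C_N) = C_{N+1} \subseteq C_{\infty}$.

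The main construction is the convex combination family $G_{\lambda} := \lambda C_{max} + (1-\lambda) C_0$ for $\lambda \in [0,1]$. A direct check using convex combinations of RCIS-witnessing inputs --- which stay admissible by convexity of $S_{xu}$ and land in $\lambda C_{max} + (1-\lambda) C_0$ at the next step thanks to linearity of the dynamics --- shows that each $G_{\lambda}$ is an RCIS. The family is monotone with $G_0 = C_0$, $G_1 = C_{max}$, and $G_{\lambda} \to C_{max}$ in Hausdorff distance as $\lambda \uparrow 1$. Hence it suffices to prove $G_{\lambda} \subseteq C_{\infty}$ for every $\lambda \in [0,1)$; letting $\lambda \uparrow 1$ then yields $C_{max} \subseteq cl(C_{\infty})$.

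The main obstacle is establishing $G_{\lambda} \subseteq C_{\infty}$ uniformly for $\lambda \in [0, 1)$. For $\lambda$ small, a direct proximity estimate places $G_{\lambda}$ inside a $\lambda M$-thickening of $C_0$ (with $M$ bounding the diameter of $C_{max}$), which by the interior assumption lies in $C_{k_0}$. The general case would be handled by a bootstrap: whenever $G_{\lambda} \subseteq C_{\infty}$, the compactness-openness argument from the previous paragraph yields $G_{\lambda} \subseteq int(C_{j(\lambda)})$ for some finite $j(\lambda)$, providing a positive margin for enlarging $\lambda$. The delicate step is passing to the supremum of the reachable range; the idea I plan to try is to exploit $C_0 \subseteq int(C_{\infty})$ together with the convex-analytic fact that mixing an interior point of $cl(C_{\infty})$ with any point of $C_{max}$ (with a strictly positive weight on the interior point) lands in $int(cl(C_{\infty})) = C_{\infty}$, where the last equality uses that $C_{\infty}$ is open with non-empty interior. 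This keeps $G_{\lambda}$ strictly inside $C_{\infty}$ even in the limit and prevents the boundary contact that would otherwise obstruct the bootstrap.
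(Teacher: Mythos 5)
Your setup (the convex-combination family $\lambda C_{max} + (1-\lambda)C_0$, each member an RCIS, with the goal of squeezing these into $C_\infty$) is exactly the paper's construction, and your easy direction is fine. But the hard direction has a genuine gap, and it sits precisely where you flag the "delicate step." Your bootstrap extends $\lambda$ by an amount controlled by the margin $\delta(\lambda)$ with which $G_\lambda$ sits inside $int(C_{j(\lambda)})$; nothing prevents $j(\lambda)\to\infty$ and $\delta(\lambda)\to 0$, so the process can stall at some $\lambda^*<1$ with $G_{\lambda^*}$ touching the boundary of $cl(C_\infty)$. The fix you propose is circular: the convex-analytic fact that mixing an interior point of $cl(C_\infty)$ with a point $q$ (with positive weight on the interior point) lands in $int(cl(C_\infty))$ requires $q\in cl(C_\infty)$, and for $q\in C_{max}$ that is exactly the inclusion $C_{max}\subseteq cl(C_\infty)$ you are trying to prove. (If instead you mix with points of $G_{\lambda^*}\subseteq cl(C_\infty)$, you only recover $G_\lambda$ for $\lambda<\lambda^*$, which you already have.) A secondary soft spot: the claim $Pre(int(X))\subseteq int(Pre(X))$, used to conclude that $C_\infty$ is open, is not justified — with a compact coupled constraint $S_{xu}$, a witness input $u$ for $x$ need not remain admissible for nearby $x'$, so $Pre$ of an open set need not be open. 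The paper only ever establishes $int(C_\infty)=\cup_k int(C_k)$, not openness of $C_\infty$.

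What is missing is the dynamic ingredient that makes a \emph{single} small $\alpha$ suffice, with no bootstrap at all: the superadditivity of $Pre$ under Minkowski sums (Lemma \ref{lem:pre_decomp}, with Lemma \ref{lem:pre_linear}), combined with $Pre_\Sigma(C_{max},S_{xu},D)=C_{max}$. These give $Pre_{\Sigma}^{k}(\alpha C_{max}+(1-\alpha)C_0,S_{xu},D)\supseteq \alpha C_{max}+(1-\alpha)C_k$, and since $\alpha C_{max}+(1-\alpha)C_0\subseteq C_{k_0}$ for some finite $k_0$ (your small-$\lambda$ step), taking unions over $k$ yields $\alpha C_{max}+(1-\alpha)C_\infty\subseteq C_\infty$. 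From $\alpha C_{max}+(1-\alpha)cl(C_\infty)\subseteq cl(C_\infty)=\alpha\, cl(C_\infty)+(1-\alpha)cl(C_\infty)$ one concludes $C_{max}\subseteq cl(C_\infty)$ either by the order cancellation theorem for compact convex sets (as the paper does) or by iterating the inclusion to get $(1-(1-\alpha)^n)C_{max}\subseteq C_\infty$ and letting $n\to\infty$. Without some version of this propagation of $C_{max}$ through the backward reachable sets, your argument never uses the invariance of $C_{max}$ beyond the RCIS check on $G_\lambda$, and the purely geometric bootstrap cannot reach $\lambda=1$.
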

\begin{proof}[Sketch proof]
Note that the maximal RCIS $C_{max}$ is bounded due to Assumption \ref{asp:cpt}. Since $C_0 \subseteq int(C_{ \infty})$ and $C_{max}$ is bounded, intuitively, we can find a small enough $ \alpha >0$ such that the set $C(\alpha) =(1- \alpha)C_{0} +\alpha C_{max}$ is contained in $int(C_{ \infty})$. Due to the linearity of the system, it can be shown that the infinite-step backward reachable set of $C( \alpha)$ contains $(1-\alpha) C_{ \infty} + \alpha C_{max}$. Then, the key to prove Theorem \ref{thm:C_max} is to realize that for any set $C' \subseteq int(C_{ \infty})$, the infinite-step backward reachable set of $C'$ is contained by $C_{ \infty}$. Thus, for some $ \alpha>0$, we have  $(1-\alpha) C_{ \infty} + \alpha C_{max} \subseteq C_{ \infty} \subseteq C_{max}$. That implies $C_{max} = cl(C_{ \infty})$. The complete proof can be found in Appendix.
\end{proof}
The second step is to show that the Hausdorff distance between $C_k$ and and the maximal RCIS $C_{max}$ decays to $0$ exponentially fast when $C_0 \subseteq int(C_{ \infty})$. We first introduce an extended notion of $ \lambda$-contractive sets\cite{blanchini1994ultimate}:
\begin{definition} \label{def:contractive} 
  Given a scalar $ \lambda>0$, a set $X$ is called \emph{$k$-step $ \lambda$-contractive} if the $k$-step backward reachable set of $ \lambda X$ contains the set $X$, that is $Pre_{\Sigma}^{k}( \lambda X, S_{xu},D)\supseteq X$.
\end{definition}
By definition, a set $X$ is $k$-step $\lambda$-contractive if the system can go from any state in $X$  to some state in $\lambda X$ in $k$ steps without violating any safety constraints.
Under Assumptions \ref{asp:cpt} and \ref{asp:int}, the closure of the infinite-step backward reachable set $cl(C_{ \infty})$ always contains a $N$-step $ \lambda$-contractive set, shown by the following theorem. 
\begin{theorem} \label{thm:conv_inv_ext} 
	Under Assumptions \ref{asp:cpt} and \ref{asp:int}, there exist an integer $N>0$ and  scalars ${\gamma\in (0,1]}$, $ \lambda\in[0,1)$ such that $ \gamma cl(C_{ \infty})$ is $N$-step $ \lambda$-contractive. Furthermore, there exists a finite integer $N_0 \geq 0$ such that $Pre_{\Sigma}^{N_0}(C_0, S_{xu},D) \supseteq  \lambda \gamma cl(C_{ \infty})$. 
\end{theorem}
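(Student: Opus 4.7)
The plan is to distill Assumption~\ref{asp:int} into a single clean scaling bound of the form $C_0 \subseteq \mu_0 \, cl(C_\infty)$ with $\mu_0 \in (0,1)$, and then to use this one inclusion together with monotonicity of $Pre_\Sigma$ and a compactness argument to establish both parts of the theorem essentially in parallel.

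First I would observe that $0 \in C_0 \subseteq int(C_\infty) = int(cl(C_\infty))$, where the last equality uses the standard convex-set identity $int(C) = int(cl(C))$. Consequently the Minkowski gauge $p$ of $cl(C_\infty) = C_{max}$ (Theorem~\ref{thm:C_max}) is well defined and continuous, and compactness of $C_0$ yields $\mu_0 := \sup_{x \in C_0} p(x) < 1$, so $C_0 \subseteq \mu_0 \, cl(C_\infty)$. This single inclusion is the engine of the rest of the proof.

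For Part~1, I would fix any $\gamma \in (\mu_0, 1)$ and set $\lambda := \mu_0/\gamma \in (0,1)$. Convexity of $cl(C_\infty)$ together with $0 \in int(cl(C_\infty))$ gives $\gamma \, cl(C_\infty) \subseteq int(cl(C_\infty)) = int(C_\infty)$. Using the identity $int(C_\infty) = \bigcup_{k \geq 0} int(C_k)$ for the nested convex family $\{C_k\}$ (which follows by picking $n+1$ affinely independent points in a small ball around any $x \in int(C_\infty)$ and using convexity of each $C_k$) together with compactness of $\gamma \, cl(C_\infty)$, I can extract a finite $N > 0$ with $\gamma \, cl(C_\infty) \subseteq int(C_N) \subseteq C_N$. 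Monotonicity of $Pre_\Sigma^N$ applied to $C_0 \subseteq \mu_0 \, cl(C_\infty)$ then yields
\[
\gamma \, cl(C_\infty) \subseteq C_N = Pre_\Sigma^N(C_0, S_{xu}, D) \subseteq Pre_\Sigma^N(\mu_0 \, cl(C_\infty), S_{xu}, D) = Pre_\Sigma^N(\lambda \gamma \, cl(C_\infty), S_{xu}, D),
\]
which is exactly the $N$-step $\lambda$-contractiveness of $\gamma \, cl(C_\infty)$. For Part~2, the compact set $\lambda \gamma \, cl(C_\infty) = \mu_0 \, cl(C_\infty)$ also lies in $int(C_\infty) = \bigcup_{k \geq 0} int(C_k)$, so the same compactness step yields a finite $N_0 \geq 0$ with $\mu_0 \, cl(C_\infty) \subseteq C_{N_0} = Pre_\Sigma^{N_0}(C_0, S_{xu}, D)$, as required.

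I expect the main obstacle to be technical rather than conceptual: cleanly establishing $int(C_\infty) = \bigcup_k int(C_k)$ so that the cover is genuinely open, without which the inclusion $\gamma \, cl(C_\infty) \subseteq C_\infty$ only gives a set-theoretic cover from which no finite subcover can be extracted. The conceptual key is to aim the scaling bound at $cl(C_\infty)$ rather than at any particular $C_{k_0}$: the latter choice would force one to invert $Pre_\Sigma$ against the scaling, a step that is delicate in the presence of additive disturbances.
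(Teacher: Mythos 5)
Your proof is correct and follows essentially the same route as the paper's: establish a strict scaling bound $C_0 \subseteq \mu_0\, cl(C_\infty)$ with $\mu_0 < 1$, use convexity of the nested family $\{C_k\}$ plus compactness of $\gamma\, cl(C_\infty) \subseteq int(C_\infty)$ to extract a finite $N$ with $\gamma\, cl(C_\infty) \subseteq C_N$, and close with monotonicity of $Pre_\Sigma^N$ (the paper packages the finite-subcover step as its Lemma~\ref{lem:h_conv} and takes $\gamma = \beta_1 \in (\beta_0,1)$, $\lambda = \beta_0/\beta_1$, $N_0 = N$, exactly as you do). The only substantive difference is how the bound $\mu_0 < 1$ is obtained: you use the Minkowski gauge of $cl(C_\infty)$ together with compactness of $C_0$, whereas the paper writes $C_0 + \alpha\, cl(C_\infty) \subseteq cl(C_\infty)$ and invokes an order cancellation theorem for Minkowski sums; your gauge argument is the more elementary and self-contained of the two.
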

\begin{proof}[Sketch proof]
	Since $ C_0 \subseteq int(C_{ \infty})$ and $C_{ \infty}$ is convex, there exists positive scalars $ \beta_0$ and $ \beta_1$, with $0 < \beta_0 < \beta_1 <1$, such that $C_0 \subseteq \beta_0 cl(C_{ \infty}) \subset \beta_1 cl(C_{ \infty}) \subseteq int(C_{ \infty})$. Since $ \beta_1 cl(C_{ \infty}) \subseteq int(C_{ \infty})$ and the $k$-step backward reachable set of $C_0$ converges to $C_{ \infty}$, it can be proven that there exists a finite $N$ such that $ \beta_1 cl(C_{ \infty}) \subseteq C_{N} $. Since $C_0 \subseteq \beta_0 cl(C_{ \infty})$, $C_{N}$ is contained by the $N$-step backward reachable set of $ \beta_0 cl(C_{ \infty})$. That implies $ \beta_1 cl(C_{ \infty}) \subseteq C_{N}$ is contained in the $N$-step backward reachable set of $\beta_0 cl(C_{ \infty})$, and thus is a $N$-step $(\beta_0/\beta_1)$-contractive set. By assigning $ \gamma = \beta_1$, $\lambda = \beta_0/ \beta_1$ and $N_0=N$, the statement in Theorem \ref{thm:conv_inv_ext} is proven.  A complete proof can be found in Appendix.
\end{proof}
By Theorems \ref{thm:C_max} and \ref{thm:conv_inv_ext}, we show that $ \gamma cl(C_{ \infty})= \gamma C_{max}$ is $N$-step $ \lambda$-contractive. Note that a $k$-step $ \lambda$-contractive set $X$ is not necessarily an RCIS unless $k=1$ and $0\in X$. Thus, $ \gamma C_{max}$ may not be an RCIS.

Recall that our goal in the second step of proving Theorem \ref{thm:main} is to show the convergence rate of the $k$-step backward reachable set $C_{k}$ to $C_{max}$. So how is $ \gamma C_{max}$ being $k$-step $\lambda$-contractive set related to the convergence rate of backward reachable sets? Let $C$ be an RCIS of $\Sigma$ in $S_{xu}$. 
It turns out that if there exists a factor $\gamma\in (0,1]$ such that the scaled set $\gamma C$ is $N$-step $ \lambda$-contractive for some $N$ and $ \lambda\in [0,1)$, then the $k$-step backward reachable set of $ \lambda\gamma C$ approaches to $C$ exponentially fast as $k$ increases. The proof of this statement is enabled by the following theorem.
\begin{theorem} \label{thm:conv_ext} 
Under Assumption \ref{asp:cpt}, for any convex RCIS  $C$ of $\Sigma$ in $S_{xu}$, suppose that there exist $ \gamma\in (0,1)$, $N$ and $ \lambda\in[0,1)$ such that $ \gamma C$ is  $N$-step $ \lambda$-contractive, that is
\begin{align}
    Pre_{\Sigma}^{N}( \lambda \gamma C, S_{xu},D) \supseteq \gamma C.
\end{align}
Then, for any scalar  $ \xi $ with $ 1 > \xi \geq \lambda \gamma$,
\begin{align} \label{eqn:conv_1stp}  
   Pre_{\Sigma}^{N}( \xi C, S_{xu}, D) \supseteq g( \xi) C, 
\end{align}
where 
\begin{align} \label{eqn:g} 
    g( \xi) = \frac{1- \gamma}{1- \lambda \gamma} \xi + \frac{(1- \lambda) \gamma}{1- \lambda \gamma} \geq  \xi. 
\end{align}
\end{theorem}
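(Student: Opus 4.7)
The plan is to express $\xi C$ as a convex combination of the two targets $\lambda\gamma C$ and $C$, for which the contractivity hypothesis and the RCIS property respectively give useful $N$-step predecessor inclusions, and then to transport that convex combination through $Pre^{N}_{\Sigma}$. The key interpolation identity, which I would prove as an auxiliary lemma, is that for any convex sets $C_1, C_2 \subseteq \R^{n}$ and any $\alpha \in [0,1]$,
\[
\alpha\, Pre^{N}_{\Sigma}(C_1, S_{xu}, D) + (1-\alpha)\, Pre^{N}_{\Sigma}(C_2, S_{xu}, D) \subseteq Pre^{N}_{\Sigma}(\alpha C_1 + (1-\alpha) C_2,\, S_{xu},\, D).
\]

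To establish this lemma, I would argue by induction on $N$. For the base case $N=1$, given $x_i \in Pre_{\Sigma}(C_i, S_{xu}, D)$ witnessed by inputs $u_i$, set $x = \alpha x_1 + (1-\alpha) x_2$ and $u = \alpha u_1 + (1-\alpha) u_2$. Convexity of $S_{xu}$ yields $(x,u) \in S_{xu}$, and for every $d \in D$ the shared disturbance factors out of the convex combination:
\[
Ax + Bu + Ed = \alpha(Ax_1 + Bu_1 + Ed) + (1-\alpha)(Ax_2 + Bu_2 + Ed) \in \alpha C_1 + (1-\alpha) C_2,
\]
since each summand lies in the corresponding $C_i$. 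The inductive step follows from the identity $Pre^{N+1}_{\Sigma} = Pre_{\Sigma} \circ Pre^{N}_{\Sigma}$, monotonicity of $Pre_{\Sigma}$ with respect to set inclusion, and the fact that Minkowski sums of convex sets remain convex so the base case applies to the inner sets.

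With the lemma in hand, I would choose $\alpha = (1-\xi)/(1-\lambda\gamma)$, which lies in $[0,1]$ precisely because $\lambda\gamma \leq \xi \leq 1$. Convexity of $C$ gives the scalar identity $(a+b) C = aC + bC$ for $a, b \geq 0$, so $\alpha(\lambda\gamma C) + (1-\alpha) C = (\alpha\lambda\gamma + 1 - \alpha)\, C = \xi C$. Applying the interpolation lemma, the contractivity hypothesis $Pre^{N}_{\Sigma}(\lambda\gamma C, S_{xu}, D) \supseteq \gamma C$, and the iterated RCIS inclusion $Pre^{N}_{\Sigma}(C, S_{xu}, D) \supseteq C$ (obtained by monotonicity from $C \subseteq Pre_{\Sigma}(C, S_{xu}, D)$), I would conclude
\[
Pre^{N}_{\Sigma}(\xi C, S_{xu}, D) \supseteq \alpha\gamma C + (1-\alpha) C = (\alpha\gamma + 1 - \alpha)\, C.
\]
Substituting the chosen $\alpha$ identifies the coefficient with $g(\xi)$, and a one-line calculation gives $g(\xi) - \xi = \gamma(1-\lambda)(1-\xi)/(1-\lambda\gamma) \geq 0$.

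The main obstacle is the base case of the interpolation lemma, where it matters crucially that the \emph{same} disturbance $d$ appears in both witness trajectories; if the adversary were allowed to choose different disturbances for the two summands, the inclusion would fail. This is where linearity of the dynamics and convexity of $S_{xu}$ do all the real work. Everything afterwards---monotonicity, the identity $(a+b) C = aC + bC$ for convex $C$, and the algebraic identification with $g(\xi)$---is routine bookkeeping.
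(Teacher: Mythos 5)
Your proof is correct and follows essentially the same route as the paper: your interpolation lemma is exactly the paper's Lemmas \ref{lem:pre_decomp} and \ref{lem:pre_linear} combined, and your convex split of $\xi C$ into $\alpha(\lambda\gamma C)+(1-\alpha)C$ with $\alpha=(1-\xi)/(1-\lambda\gamma)$ reproduces the paper's decomposition with $\lambda_{0,1}=1-\alpha$ and $\lambda_{0,2}/(\lambda\gamma)=\alpha$. The only cosmetic difference is that you package the splitting argument as a single two-set interpolation lemma rather than invoking the paper's scaling lemmas separately.
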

For any $C$ satisfying the conditions in Theorem \ref{thm:conv_ext}, \eqref{eqn:conv_1stp} implies that the $N$-step backward reachable set of $ \lambda \gamma C$ expands at least to $g(\lambda \gamma) C \supseteq \lambda\gamma C$. By applying Theorem \ref{thm:conv_ext} $k$ times, we obtain that
\begin{align} \label{eqn:conv_kstp} 
   Pre_{\Sigma}^{kN}(\lambda \gamma C, S_{xu}, D) \supseteq g^{k}( \lambda \gamma) C, 
\end{align}
where $g^{k}(\cdot)= g(g(...))$ is the function that composes $g(\cdot)$ for $k$ times.
According to Theorems \ref{thm:C_max} and \ref{thm:conv_inv_ext}, $C$ in \eqref{eqn:conv_kstp} can be replaced by $C_{ max}$. That is, for $\gamma$, $N$ and $ \lambda$ in Theorem \ref{thm:conv_inv_ext}, 
\begin{align} \label{eqn:conv_kstp_2} 
Pre_{\Sigma}^{kN}( \lambda \gamma C_{max}, S_{xu}, D) \supseteq g^{k}( \lambda \gamma) C_{max}. 
\end{align}
By Theorem \ref{thm:conv_inv_ext}, there exists $N_0$ such that the $N_0$-step backward reachable set $C_{N_0} $ of $C_0$ contains $\lambda \gamma C_{max}$.  
Then, by \eqref{eqn:conv_kstp_2}, 
\begin{align} \label{eqn:conv_kstp_3} 
	\begin{split}
C_{max} \supseteq C_{N_0+kN} \supseteq Pre_{\Sigma}^{kN}( \lambda \gamma C_{max}, S_{xu}, D) \supseteq g^{k}(\lambda \gamma ) C_{max}. 
	\end{split}
\end{align}
The inclusion relation in \eqref{eqn:conv_kstp_3} implies that the Hausdorff distance between  $C_{N_0+kN}$  and the maximal RCIS $C_{max}$ is bounded by
\begin{align} \label{eqn:bound_hd} 
   d(C_{N_0+kN},C_{max})  \leq (1-g^{k}( \lambda \gamma)) \sup_{x\in C_{max}} \Vert x\Vert_{2}.
\end{align}
Since $g(\cdot)$ in \eqref{eqn:g} is affine, it can be shown that  
 \begin{align} \label{eqn:gk} 
     1- g^{k}(\lambda \gamma) = \left( \frac{1- \gamma}{1- \lambda \gamma} \right)^{k} (1- \lambda \gamma).
 \end{align}
Combining \eqref{eqn:bound_hd} and \eqref{eqn:gk}, we have   $$d(C_{N_0+kN}, C_{max}) \leq c a^{k},$$ 
where $c = (1-  \lambda\gamma)\sup_{x\in C_{max}} \Vert x\Vert_{2} $ and ${a= (1- \gamma) / (1- \lambda \gamma)}$. With $ \gamma \in (0,1]$, $ \lambda\in [0,1)$, it is easy to check that $a \in [0,1)$. Since $C_{max}$ is bounded, $c$ is finite.  Thus, the Hausdorff distance between $C_{N_0+kN}$ and $C_{max}$ decays to $0$ exponentially fast as $k$ goes to infinity. That completes the proof of Theorem \ref{thm:main}.   

\section{Numerical Example} \label{sec:example} 
Consider the two-dimensional system $\Sigma$
\begin{align}
    \Sigma: x(t+1) = \begin{bmatrix}
    1.1 & 1 \\ 0 & 1
    \end{bmatrix}x + 
	\begin{bmatrix}
	0 \\ 1	
	\end{bmatrix} u +
	\begin{bmatrix}
	1\\1	
	\end{bmatrix} d,
\end{align}
with $x\in \R^{2}$, $u\in \R$ and $d\in D=[-0.01,0.01]$. The safe set of the system is $S_{xu} = S_{x}\times U$, with $S_{x} = [-4,4]\times [-2,2]$ and $U=[-0.3,0.3] $. Denote the maximal RCIS of $\Sigma$ in $S_{xu}$ by $C_{max}$. The set $C_0$ is selected to be the maximal RCIS in a scaled safe set $\tilde{S}_{xu} =(0.1S_{x})\times U$, shown by the yellow polytope in Fig. \ref{fig:visual}.   

MPT3 toolbox\cite{MPT3} and YALMIP \cite{Lofberg2004}, equipped with GUROBI 9.5.0\cite{gurobi}, are used to implement the computation of the backward reachable set in \eqref{eqn:pre} and the linear program in \eqref{eqn:lp}. The computed $k$-step backward reachable sets $C_{k}$ of $C_0$ for $k=1, \cdots, 16$ are visualized in Fig. \ref{fig:visual}, which converges to the maximal $C_{max}$ in $16$ steps. By solving the linear program in \eqref{eqn:lp}, we checked that $C_0$ is contained in the interior of $C_2$. Thus, Theorem \ref{thm:main} implies that  the $k$-step backward reachable set $C_k$ converges to the maximal RCIS $C_{ max}$ at least exponential fast. The Hausdorff distance between $C_{k}$ and $C_{max}$ is shown by the red curve in Fig. \ref{fig:conv_curve}. We also manually fit an exponential function $y(k)=4.61\times 0.8^{k}$ (the blue curve in Fig. \ref{fig:conv_curve}) that bounds the actual Hausdorff distance from above. The existence of this exponential decaying upper bound is predicted by Theorem \ref{thm:main}. Given any system and $C_0$, how to find such an exponential decaying function without computing all the backward reachable sets would be part of our future work.

\begin{figure}[]
	\centering
	\includegraphics[width=0.45\textwidth]{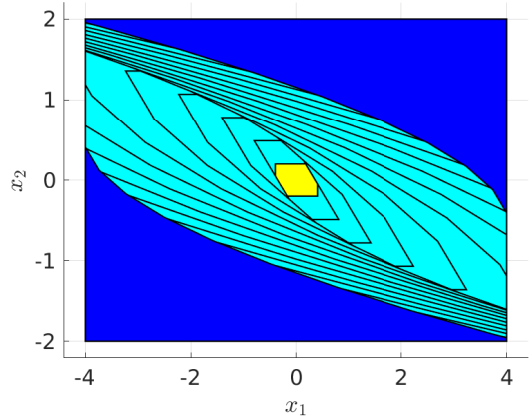}
	\caption{The $k$-step backward reachable set $C_{k}$ for $k=1, \cdots, 16$ are the nested cyan polytopes, where a larger one corresponds to a larger $k$. The set $C_0$ is the yellow polytope in the middle. The maximal RCIS $C_{max}$ is equal to the largest cyan polytope $C_{16}$. The dark blue rectangle is the safe set $S_{xu}$.}
	\label{fig:visual}
\end{figure}

\begin{figure}[]
	\centering
	\includegraphics[width=0.38\textwidth]{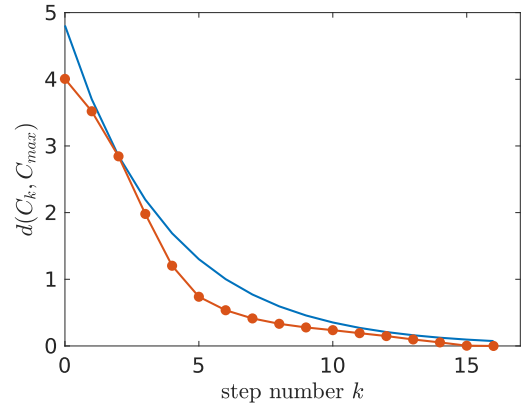}
	\caption{The Hausdorff distance $d(C_k, C_{max})$ between the $k$-step backward reachable set $C_{k}$ and the maximal RCIS $C_{max}$ versus the number of steps $k$ (the red curve with dots). An exponential function $y(k) = 4.61\times 0.8^k$  that bounds the Hausdorff distance from above is manually fitted (the blue curve).}
	\label{fig:conv_curve}
\end{figure}
\section{Discussion} \label{sec:dis} 
In this section, we compare our result with the existing ones in \cite{de2004computation, darup2014general,gutman1987algorithm,cwikel1986convergence}.
Here we adopt the notation in \cite{de2004computation, darup2014general} and call a set $X$ a \emph{C-set} if $X$ is convex, compact and contains $0$ in the interior. Note that the results in \cite{de2004computation, darup2014general,gutman1987algorithm,cwikel1986convergence} are all based on the condition that the disturbance set $D=\{0\} $ and $S_{xu}$ is a C-set. Hence, we assume that the above condition holds for the remainder of this section. 

First, \cite{de2004computation, darup2014general,gutman1987algorithm} identify the two sufficient conditions under which the $k$-step backward reachable set $C_{k}$ converges to the maximal CIS in Hausdorff distance.  The first sufficient condition \cite{gutman1987algorithm, de2004computation,darup2014general} is that $(A,B)$ is controllable and $C_0=\{0\} $; the second one \cite[Proposition 30]{de2004computation} is that $(A,B)$ is asymptotically stabilizable and $C_0$ is a controlled invariant C-set. When the first sufficient condition is satisfied, \cite{de2004computation, darup2014general} show that $C_0=\{0\}$ also satisfies the conditions in Theorem \ref{thm:main}. When the second sufficient condition is satisfied, the corresponding $C_0$ may not satisfy the condition in Theorem \ref{thm:main}. However, the proof of \cite[Proposition 30]{de2004computation} shows that $C_0$ must contain a smaller CIS that satisfies the conditions in Theorem \ref{thm:main}. Thus, both of the sufficient conditions are corollaries of our result.

Though \cite{de2004computation, darup2014general,gutman1987algorithm} prove the convergence of $C_{k}$, none of those works show the convergence rate of $C_{k}$.  To our best  knowledge, \cite{cwikel1986convergence} is the only work that shows the convergence rate of $C_{k}$, under the condition that $(A,B)$ is controllable and $C_0=\{0\} $ (the first sufficient condition above). Originally \cite{cwikel1986convergence} derives the convergence rate in a different metric. But by converting the metric in \cite{cwikel1986convergence} to Hausdorff distance, we can show that the convergence rate of $C_{k}$ in \cite{cwikel1986convergence} is equivalent to the exponential convergence in Hausdorff distance proven in this work. That is, under the first sufficient condition in the previous paragraph, our result coincides with the result in \cite{cwikel1986convergence}.

\section{Conclusion} \label{sec:con} 
In this work, we consider linear systems with disturbance and show a sufficient condition under which the $k$-step backward reachable set of RCISs converges to the maximal RCIS exponentially fast. When all sets are represented by polytopes, this sufficient condition can be numerically checked via the linear program in \eqref{eqn:lp}.  When restricted to disturbance-free systems, our result implies the existing results in  \cite{de2004computation, darup2014general,gutman1987algorithm,cwikel1986convergence}.

In terms of applications, our result provides convergence guarantees for inside-out algorithm (and its variants) \cite{gutman1987algorithm,de2004computation,rungger2017computing,fiacchini2017computing,wintenberg2020implicit, anevlavis2021enhanced}, and sheds some light on novel analysis for RCIS-based control synthesis algorithms. 
For instance, in constrained model predictive control (MPC) equipped with a controlled invariant terminal set \cite{blanchini2008set}, our results imply that under mild conditions, the DoA of MPC enlarges to the maximal DoA exponentially fast as the prediction horizon $T$ increases, which provides new insights for selecting the prediction horizon $T$. Moreover, if parameters $N_0$, $N$, $a$ and $c$ in Theorem \ref{thm:main} are known, we can quantitatively evaluate the Hausdorff distance between the DoA of MPC with respect to any given $T$ and the maximal DoA via \eqref{eqn:d_bound}.  The algorithms for estimating parameters $N_0$, $N$, $a$ in Theorem \ref{thm:main} will be part of our future work. 
\bibliographystyle{IEEEtran}
\bibliography{ref}

\appendix

\subsection{On the existence of stationary points} \label{sec:feq} 
\cite[Theorem 12]{caravani2002doubly} only considers disturbance-free systems with decoupled safe set $S_{xu}= X\times U$ for a convex compact $X \subseteq \R^{n}$ and a closed convex $U \subseteq \R^{m}$. To apply \cite[Theorem 12]{caravani2002doubly} to our setting, given the system $\Sigma$ of the form \eqref{eqn:sys}, we construct the disturbance-free system $\Sigma'$:
\begin{align}
   \begin{bmatrix}
   x(t+1)\\u(t+1)	
   \end{bmatrix} = 
   \begin{bmatrix}
   A & B \\ 0 & 0
   \end{bmatrix} \begin{bmatrix}
   x(t)\\u(t)
   \end{bmatrix} + 
   \begin{bmatrix}
   0\\ I	
   \end{bmatrix} v_1(t) +
	\begin{bmatrix}
	E \\0	
	\end{bmatrix} v_2(t),
\end{align}
where $(x,u)$ is the state and $v_1$, $v_2$ are two inputs of the system $\Sigma'$. Given a compact convex RCIS $C_{0}$ of $\Sigma$ in $S_{xu}$, construct the set $C_0'$ as
\begin{align}
    C_0' = \{(x,u) \in S_{xu} \mid x\in C_0, \exists d\in D, Ax+Bu+Ed \in C_0\}. 
\end{align}
It is easy to check that $C_0'$ is a compact convex controlled invariant set\cite{caravani2002doubly} of the system $\Sigma'$ in the decoupled safe set $S_{xu}\times \R^{m}\times D$. Then, by \cite[Theorem 12]{caravani2002doubly}, there exists $(x_{e},u_{e})\in C_{0}'$, $v_{1,e}\in \R^{m}$ and $v_{2,e}\in D$ such that 
\begin{align} \label{eqn:feq} 
    \begin{bmatrix}
    x_{e}\\u_{e}	
    \end{bmatrix} = 
	\begin{bmatrix}
	A  &  B \\ 0 & 0
    \end{bmatrix}
	\begin{bmatrix}
	x_{e}\\ u_{e}
	\end{bmatrix} + 
	\begin{bmatrix}
	0 \\ I	
	\end{bmatrix} v_{1,e} + 
	\begin{bmatrix}
	E \\0	
	\end{bmatrix} v_{2,e}.
\end{align}
Define $d_{e}= v_{2,e}$. By \eqref{eqn:feq}, $x_{e}=Ax_{e} + Bu_{e}+Ed_{e}$. By construction of $C_0'$, we have $x_{e}\in C_0$ and  $(x_{e},u_{e})\in S_{xu}$. 

\subsection{Proofs of Theorems \ref{thm:C_inf}-\ref{thm:conv_ext}} \label{sec:pf_thms} 
The following lemmas reveal properties of the backward reachable sets, which are crucial for proving Theorems \ref{thm:C_inf}-\ref{thm:conv_ext}.
\begin{lemma} \label{lem:pre_decomp} 
	For a linear system $\Sigma$ in form of \eqref{eqn:sys}, 
\begin{align} \label{eqn:pre_decomp} 
	\begin{split}
Pre_{\Sigma}(X_1+X_2, S_{xu,1}+S_{xu,2},D_1+D_2) \supseteq  \\
Pre_{\Sigma}(X_1, S_{xu,1},D_1) + Pre_{\Sigma}(X_2, S_{xu,2}, D_2).
	\end{split}
\end{align}
\end{lemma}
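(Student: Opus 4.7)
The plan is to prove the set inclusion by a direct element-chasing argument, exploiting linearity of the dynamics and the algebraic behavior of Minkowski sums.

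First I would take an arbitrary point $x$ in the right-hand side and decompose it as $x = x_1 + x_2$ with $x_i \in Pre_{\Sigma}(X_i, S_{xu,i}, D_i)$ for $i = 1, 2$. By Definition \ref{def:pre}, for each $i$ there exists an input $u_i$ such that $(x_i, u_i) \in S_{xu,i}$ and $A x_i + B u_i + E D_i \subseteq X_i$. The natural candidate witness for the membership $x \in Pre_{\Sigma}(X_1+X_2, S_{xu,1}+S_{xu,2}, D_1+D_2)$ is then $u = u_1 + u_2$.

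Next I would verify the two conditions in the definition of $Pre$ for the pair $(x, u)$. The safety constraint is immediate since $(x, u) = (x_1 + x_2, u_1 + u_2) \in S_{xu,1} + S_{xu,2}$ by definition of the Minkowski sum. For the robust successor condition, given any $d \in D_1 + D_2$, write $d = d_1 + d_2$ with $d_i \in D_i$. Using linearity,
\begin{align*}
A x + B u + E d &= A(x_1 + x_2) + B(u_1 + u_2) + E(d_1 + d_2) \\
&= (A x_1 + B u_1 + E d_1) + (A x_2 + B u_2 + E d_2),
\end{align*}
and each summand lies in $X_i$ because $A x_i + B u_i + E D_i \subseteq X_i$. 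Hence $A x + B u + E d \in X_1 + X_2$, and since $d$ was arbitrary, $A x + B u + E(D_1 + D_2) \subseteq X_1 + X_2$.

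There is essentially no obstacle here — the only thing worth being careful about is the direction of the inclusion in the disturbance decomposition (we must handle an \emph{arbitrary} $d \in D_1 + D_2$ rather than fixing a particular $d_1, d_2$ up front), which is exactly why the argument yields a one-sided inclusion in \eqref{eqn:pre_decomp} rather than equality. The reverse inclusion would require splitting an arbitrary admissible $(x, u)$ coherently across the two subsystems, and there is no reason such a splitting should exist in general.
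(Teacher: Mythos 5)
Your proof is correct and follows essentially the same element-chasing argument as the paper: decompose $x = x_1 + x_2$, use the witness $u = u_1 + u_2$, and verify both conditions of Definition \ref{def:pre} via linearity. Your explicit handling of an arbitrary $d \in D_1 + D_2$ is a slightly more careful rendering of the paper's step $ED_1 + ED_2 \subseteq X_1 + X_2$, which implicitly uses $E(D_1+D_2) = ED_1 + ED_2$.
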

\begin{proof}
Let $x_i\in Pre_{\Sigma}(X_{i}, S_{xu,i},D_{i})$ for $i=1,2$. Then, there exists $u_i$ such that $(x_{i},u_{i})\in S_{xu,i}$ and $ Ax_{i}+Bu_{i}+ED_{i} \subseteq X_{i}$, for $i=1,2$. Thus, we have
\begin{align}
	(x_1,u_1)+(x_2,u_2)\in S_{xu,1}+S_{xu,2}\\
	A(x_1+x_2)+B(u_1+u_2)+ED_1+ED_2 \subseteq X_1 + X_2.
\end{align}
Thus, $x_1+x_2\in Pre_{\Sigma}(X_1+X_2, S_{xu,1}+S_{xu,2}, D_1+D_2)$.
\end{proof}

\begin{lemma} \label{lem:pre_decomp_cvx} 
	For a linear system $\Sigma$ in form of \eqref{eqn:sys}, if $X$, $S_{xu}$ and $D$ are convex, then for any $a\in[0,1]$, $b\in[0,1]$ and $k\geq 1$, we have
\begin{align} \label{eqn:pre_decomp_cvx} 
		Pre^{k}_{\Sigma}(X, S_{xu},D) \supseteq & Pre^{k}_{\Sigma}(aX, bS_{xu},bD) +\\
& Pre^{k}_{\Sigma}((1-a)X, (1-b)S_{xu}, (1-b)D).\nonumber
\end{align}
\end{lemma}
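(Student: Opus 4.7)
The plan is to prove Lemma \ref{lem:pre_decomp_cvx} by induction on $k$, using Lemma \ref{lem:pre_decomp} as the core decomposition tool together with the fact that for a convex set $Y$ and $t\in[0,1]$, the Minkowski sum satisfies $tY+(1-t)Y = Y$. The monotonicity of $Pre_{\Sigma}(\cdot, S_{xu}, D)$ in its first argument, which is immediate from Definition \ref{def:pre}, will also play a role in carrying the inclusion through the recursion.

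For the base case $k=1$, I would apply Lemma \ref{lem:pre_decomp} with $X_1 = aX$, $X_2 = (1-a)X$, $S_{xu,1} = bS_{xu}$, $S_{xu,2} = (1-b)S_{xu}$, $D_1 = bD$, $D_2 = (1-b)D$. Convexity of $X$, $S_{xu}$ and $D$ implies $X_1+X_2 = X$, $S_{xu,1}+S_{xu,2} = S_{xu}$ and $D_1+D_2 = D$, so Lemma \ref{lem:pre_decomp} directly yields
\begin{align*}
Pre_{\Sigma}(X, S_{xu}, D) \supseteq {} & Pre_{\Sigma}(aX, bS_{xu}, bD) \\
& + Pre_{\Sigma}((1-a)X, (1-b)S_{xu}, (1-b)D),
\end{align*}
which is the case $k=1$.

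For the inductive step, assume the inclusion holds at step $k-1$. Writing $Y_{k-1} := Pre^{k-1}_{\Sigma}(aX, bS_{xu}, bD)$ and $Z_{k-1} := Pre^{k-1}_{\Sigma}((1-a)X, (1-b)S_{xu}, (1-b)D)$ for brevity, the induction hypothesis says $Pre^{k-1}_{\Sigma}(X, S_{xu}, D) \supseteq Y_{k-1} + Z_{k-1}$. Since $Pre_{\Sigma}(\cdot, S_{xu}, D)$ is monotone in its first argument,
\begin{align*}
Pre^{k}_{\Sigma}(X, S_{xu}, D) \supseteq Pre_{\Sigma}(Y_{k-1} + Z_{k-1}, S_{xu}, D).
\end{align*}
Now I would again apply Lemma \ref{lem:pre_decomp}, this time to the right-hand side, decomposing $S_{xu} = bS_{xu} + (1-b)S_{xu}$ and $D = bD + (1-b)D$ (both valid by convexity). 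This gives
\begin{align*}
Pre_{\Sigma}(Y_{k-1} + Z_{k-1}, S_{xu}, D) \supseteq {} & Pre_{\Sigma}(Y_{k-1}, bS_{xu}, bD) \\
& + Pre_{\Sigma}(Z_{k-1}, (1-b)S_{xu}, (1-b)D),
\end{align*}
and the two terms on the right unfold by definition of $Pre^{k}_{\Sigma}$ exactly to $Pre^{k}_{\Sigma}(aX, bS_{xu}, bD)$ and $Pre^{k}_{\Sigma}((1-a)X, (1-b)S_{xu}, (1-b)D)$, closing the induction.

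I do not expect a serious obstacle; the only subtlety is checking that Lemma \ref{lem:pre_decomp} is being applied with legitimate arguments at each step, i.e.\ that the ``halved'' safe set $bS_{xu}$ and disturbance set $bD$ remain convex (inherited from $S_{xu}$ and $D$), and that monotonicity of $Pre$ is used correctly before the second application of Lemma \ref{lem:pre_decomp}. The convexity hypothesis on $X$, $S_{xu}$, $D$ is used in an essential way at both levels, since without it $aX+(1-a)X$ may strictly contain $X$ and the decompositions $S_{xu} = bS_{xu}+(1-b)S_{xu}$, $D = bD+(1-b)D$ would not hold.
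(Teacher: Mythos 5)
Your proof is correct and takes essentially the same route as the paper: induction on $k$, with the base case obtained from Lemma \ref{lem:pre_decomp} together with the identity $tY+(1-t)Y=Y$ for convex $Y$, and the inductive step by applying Lemma \ref{lem:pre_decomp} once more (the paper leaves this step implicit, whereas you correctly spell out the needed monotonicity of $Pre_{\Sigma}$ in its first argument). The only superfluous worry is convexity of $bS_{xu}$ and $bD$, since Lemma \ref{lem:pre_decomp} holds for arbitrary sets.
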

\begin{proof}
For any convex set $C$, $C = \alpha C+(1- \alpha)C$ for any $ \alpha\in [0,1]$. Thus, by Lemma \ref{lem:pre_decomp}, 
\begin{align}
	Pre_{\Sigma}(X, S_{xu},D) \supseteq &  Pre_{\Sigma}(aX, bS_{xu},bD) + \\
 &Pre_{\Sigma}((1-a)X, (1-b)S_{xu}, (1-b)D). \nonumber  
\end{align}
Now assume that \eqref{eqn:pre_decomp_cvx} holds for $k=N-1$. By applying Lemma \ref{lem:pre_decomp} again, it can be easily proven that \eqref{eqn:pre_decomp_cvx} holds for $k=N$.  
Thus, by induction that \eqref{eqn:pre_decomp_cvx} holds for all $k$.
\end{proof}

\begin{lemma} \label{lem:pre_linear} 
   For a linear system $\Sigma$ in form of \eqref{eqn:sys} and any $a \geq 0$,
   \begin{align}
       Pre_{\Sigma}(aX,a S_{xu},a D) = a Pre_{\Sigma}(X,S_{xu},D).
   \end{align}
\end{lemma}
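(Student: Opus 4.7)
The plan is to prove both inclusions by a direct scaling argument that exploits the linearity of $\Sigma$. The key identity is
\begin{align*}
A(ax) + B(au) + E(aD) \;=\; a\bigl(Ax + Bu + ED\bigr),
\end{align*}
combined with the definitional equivalences $(x,u) \in aS_{xu} \iff (x,u) = a(x',u')$ for some $(x',u') \in S_{xu}$, and $Y \subseteq aX \iff (1/a)Y \subseteq X$ when $a>0$. Together these set up a bijective correspondence between input witnesses on the two sides of the equation, so the set identity drops out immediately.

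First I would dispatch the forward direction $a\,Pre_{\Sigma}(X, S_{xu}, D) \subseteq Pre_{\Sigma}(aX, aS_{xu}, aD)$. Given any $x' \in Pre_{\Sigma}(X, S_{xu}, D)$ with input witness $u'$ satisfying $(x',u') \in S_{xu}$ and $Ax' + Bu' + ED \subseteq X$, I would take $u := au'$ as the witness for $x := ax'$: membership $(x,u) \in aS_{xu}$ is immediate, and $Ax + Bu + E(aD) = a(Ax' + Bu' + ED) \subseteq aX$ by the identity above. The reverse inclusion then runs the same argument backward: for $a > 0$, given $x \in Pre_{\Sigma}(aX, aS_{xu}, aD)$ with witness $u$, rescaling by $1/a$ produces a witness for $x/a \in Pre_{\Sigma}(X, S_{xu}, D)$, whence $x \in a\,Pre_{\Sigma}(X, S_{xu}, D)$.

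The only place this argument calls for slight care is the boundary case $a = 0$, where $1/a$ is undefined. Here $aX$, $aS_{xu}$ and $aD$ all collapse to $\{0\}$ (or $\emptyset$ if the corresponding original set is empty), and both sides of the claimed identity reduce to $\{0\}$ by direct inspection, since $A\cdot 0 + B\cdot 0 + E\cdot 0 = 0$ makes $(0,0)$ a valid witness as long as the unscaled $Pre_{\Sigma}(X, S_{xu}, D)$ is nonempty — which is the case in every setting where this lemma is invoked. I do not foresee any real obstacle: the statement is essentially the linearity of $\Sigma$ repackaged as a set equation, and the bulk of the work is notational bookkeeping rather than substantive mathematics. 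The mild subtlety worth flagging in the write-up is simply to keep the $a=0$ case separate so as not to divide by zero in the main argument.
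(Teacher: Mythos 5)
Your proof is correct and takes essentially the same approach as the paper: both inclusions are obtained by scaling the witness $(x,u)$ by $a$ (respectively $1/a$) and invoking linearity of the dynamics. The only difference is that you explicitly separate the degenerate case $a=0$, which the paper's own proof silently glosses over.
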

\begin{proof}
Let $x\in Pre_{\Sigma}(X,S_{xu},D)$. Then, there exists $u$ such that $(x,u)\in S_{xu}$ and $Ax+Bu+ED \subseteq X$. Thus, $(ax,au)\in a S_{xu}$ and $A ax + Bau+ aED \subseteq a X$. That is, $ax\in Pre_{\Sigma}(aX, aS_{xu}, aD)$. Therefore, $a Pre_{\Sigma}(X,S_{xu},D) \subseteq Pre_{\Sigma}(aX, aS_{xu},aD)$.

Next, let $ax\in Pre_{\Sigma}(aX,aS_{xu},aD)$. Then, there exists $au$ such that $(ax,au)\in aS_{xu}$ and $Aax+Bau+aED \subseteq aX$. Thus, $(x,u)\in S_{xu}$ and $Ax + Bu+ ED \subseteq X$. That is, $x\in Pre_{\Sigma}(X, S_{xu}, D)$. Therefore, $Pre_{\Sigma}(aX, aS_{xu},aD) \subseteq a Pre_{\Sigma}(X,S_{xu},D)$. 
\end{proof}

\begin{lemma} \label{lem:h_conv} 
 Let $\{C_{i}\}_{i=0}^{ \infty}$ be an expanding family of nonempty  compact convex set. That is, $C_{i} \subseteq C_{j}$ for any $i$, $j$ such that $i \leq j$. Suppose that $int(C_{i}) \not= \emptyset$ for all $i \geq  i_0$ for some $i_0\geq 0$, and $C_{ \infty} = \cup_{i=0}^{ \infty} C_{i}$ is bounded. Then, for any closed set $U \subseteq int(C_{ \infty})$, there  exists $i$ such that $C_{i} \supseteq U$.
\end{lemma}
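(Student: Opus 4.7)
The plan is to prove this via a local simplex argument followed by a compactness step. First, $U$ is compact: it is closed by hypothesis and bounded because $U\subseteq int(C_{\infty})\subseteq C_{\infty}$ and $C_{\infty}$ is bounded. The central local claim I would establish is that for every $x\in U$ there exist $\delta_x>0$ and a finite index $m_x$ such that $B_{\delta_x}(x)\subseteq C_{m_x}$.

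To establish this local claim, since $x\in int(C_{\infty})$, I pick $\rho_x>0$ with $B_{\rho_x}(x)\subseteq C_{\infty}$, and inside $B_{\rho_x}(x)$ I place $n+1$ affinely independent points $y_0,\dots,y_n$ forming a non-degenerate simplex $S_x$ that in turn contains some smaller $B_{\delta_x}(x)$ with $\delta_x>0$ (for example, the vertices of a regular $n$-simplex centered at $x$ with circumradius $\rho_x/2$, whose inradius is of order $\rho_x/(2n)$). Each vertex $y_j$ lies in $C_{\infty}=\cup_i C_i$ and therefore in some $C_{i_j}$; setting $m_x=\max_j i_j$ and using monotonicity, all vertices sit in $C_{m_x}$, and convexity of $C_{m_x}$ then forces $S_x\subseteq C_{m_x}$, giving $B_{\delta_x}(x)\subseteq C_{m_x}$.

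Compactness of $U$ then finishes the proof: the family $\{B_{\delta_x}(x)\}_{x\in U}$ is an open cover of $U$, and extracting a finite subcover $\{B_{\delta_{x_j}}(x_j)\}_{j=1}^{k}$ and setting $i=\max_j m_{x_j}$, monotonicity $C_{m_{x_j}}\subseteq C_i$ gives $U\subseteq \cup_{j=1}^{k} B_{\delta_{x_j}}(x_j)\subseteq C_i$, as required.

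The main obstacle is the simplex construction in the middle step: that is precisely where convexity of the $C_i$ and the finite dimension of the ambient space combine to upgrade the pointwise fact ``$x\in C_{\infty}$ implies $x\in C_i$ for some $i$'' into the neighborhood fact ``$B_{\delta_x}(x)\subseteq C_i$ for some $i$''. The hypothesis $int(C_i)\neq\emptyset$ for $i\geq i_0$ is consistent with this construction and, when $U$ is nonempty, is in fact automatically implied by it.
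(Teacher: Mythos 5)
Your proof is correct, but it takes a genuinely different route from the paper's. The paper argues globally: using the hypothesis that $int(C_i)\neq\emptyset$ for $i\geq i_0$ together with the fact that $cl(int(C_i))=C_i$ for a compact convex set with nonempty interior, it shows that the open convex set $\tilde{C}_{\infty}=\cup_i int(C_i)$ coincides with $int(C_{\infty})$, so that $\{int(C_i)\}_i$ is itself an open cover of the compact set $U$; a finite subcover plus monotonicity then finishes the argument. You argue locally: the simplex construction upgrades membership of $n+1$ affinely independent points of $B_{\rho_x}(x)$ in $\cup_i C_i$ into containment of a whole ball $B_{\delta_x}(x)$ in a single $C_{m_x}$, and you then cover $U$ by these balls. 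Both proofs invoke Heine--Borel on $U$ in the same way at the end. What your version buys is that it is more elementary (only convex hulls of finitely many points, monotonicity, and convexity of each $C_{m_x}$) and, as you correctly observe, it never uses the hypothesis $int(C_i)\neq\emptyset$ for $i\geq i_0$ --- your argument establishes the lemma without that assumption, whereas the paper needs it to conclude that the sets $int(C_i)$ cover $int(C_{\infty})$. The only cost is an explicit reliance on the finite dimension of the ambient space for the simplex step, but the paper's proof also depends on finite dimension through the compactness of the closed bounded set $U$, so nothing is lost in generality.
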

\begin{proof}
Without loss of generality, we assume $i_0 =0$. Since $C_{i}$ is convex and compact, it is easy to check that $cl(int(C_{i})) = C_{i}$. 

Define $ \tilde{C}_{ \infty} = \lim_{k \to \infty} \cup_{i=0}^{k} int(C_{i})$. Thus, $ \tilde{C}_{ \infty}$ is an open convex set. Note that 
\begin{align}
   cl(C_{ \infty}) \supseteq cl( \tilde{C}_{ \infty}) \supseteq cl( int(C_{i}))  = C_{i}, \forall i \geq 0.
\end{align}
Thus,
\begin{align}
   cl(C_{ \infty}) \supseteq cl( \tilde{C}_{ \infty}) \supseteq \lim_{k \to \infty} \cup_{i=0}^{k} C_{i}= C_{ \infty}.
\end{align}
Thus, $cl(C_{ \infty})$ is the closure of $\tilde{C}_{ \infty}$. By convexity of $ C_{ \infty}$, $\tilde{C}_{ \infty}$ is the interior of $cl(C_{ \infty})$, and also the interior of $C_{ \infty}$. 

Next, let $U$ be a closed subset of $ int(C_{ \infty})= \tilde{C}_{ \infty}$. Since $C_{ \infty}$ is bounded, $U$ is compact. Also, since $ U \subseteq \tilde{C}_{ \infty}= \cup_{i=0}^{ \infty}int(C_{i}) $, $ \{ int(C_{i})\}_{i=0}^{ \infty} $ forms a open cover of $U$. By compactness, there exists a finite subcover $\{C_{i_{k}}\}_{k=0}^{K} $ of $U$ for some $K\geq 0$. Then, $U \subseteq int(C_{i_{K}})$.
\end{proof}

\begin{lemma} \label{lem:pre_cpt} 
    Let $X$ be a nonempty compact convex set in $\R^{n}$. Under Assumption \ref{asp:cpt}, $Pre_{\Sigma}(X, S_{xu}, D)$ is compact and convex. 
\end{lemma}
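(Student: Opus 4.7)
The plan is to establish convexity and compactness separately, using Assumption \ref{asp:cpt} in both parts. Throughout, it helps to introduce the ``lifted'' set
\begin{align*}
\tilde{S} = \{(x,u) \in S_{xu} \mid Ax+Bu+ED \subseteq X\},
\end{align*}
so that $Pre_{\Sigma}(X,S_{xu},D) = \pi_{[1,n]}(\tilde{S})$. This reformulation will let me reduce both claims to standard facts about convex/compact sets and continuous projections.

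For convexity, I would take $x_1,x_2 \in Pre_{\Sigma}(X,S_{xu},D)$ with witnesses $u_1,u_2$, and show that for $\lambda \in [0,1]$ the pair $(\lambda x_1+(1-\lambda)x_2,\lambda u_1+(1-\lambda)u_2)$ is again a valid witness. Membership in $S_{xu}$ follows from convexity of $S_{xu}$. For the inclusion $A(\lambda x_1+(1-\lambda)x_2)+B(\lambda u_1+(1-\lambda)u_2)+Ed \in X$ for any $d\in D$, I would write this as the convex combination $\lambda(Ax_1+Bu_1+Ed)+(1-\lambda)(Ax_2+Bu_2+Ed)$, each summand of which lies in $X$; convexity of $X$ finishes this step.

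For compactness, the strategy is to show $\tilde{S}$ is compact and then use continuity of the projection $\pi_{[1,n]}$. Since $S_{xu}$ is compact by Assumption \ref{asp:cpt}, it suffices to show that
\begin{align*}
T := \{(x,u) \mid Ax+Bu+ED \subseteq X\} = \bigcap_{d\in D}\{(x,u) \mid Ax+Bu+Ed \in X\}
\end{align*}
is closed. Each set inside the intersection is the preimage of the closed set $X$ under the continuous affine map $(x,u)\mapsto Ax+Bu+Ed$, hence closed; an arbitrary intersection of closed sets is closed. Then $\tilde{S}=S_{xu}\cap T$ is the intersection of a compact set and a closed set, hence compact, and its image under the continuous projection $\pi_{[1,n]}$ is compact.

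No step here is particularly delicate; the only subtlety worth stating carefully is the rewriting of the robust-containment constraint $Ax+Bu+ED\subseteq X$ as the intersection over $d\in D$ in order to conclude closedness from continuity. Compactness of $D$ is not actually needed for the closedness argument — closedness of $X$ alone suffices — but $D$ being bounded ensures the constraint is nontrivial in practice. The whole proof should fit comfortably in a short paragraph.
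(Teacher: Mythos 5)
Your proposal is correct and follows essentially the same route as the paper: lift to the set $\{(x,u)\in S_{xu} \mid Ax+Bu+ED\subseteq X\}$, show it is convex and compact using convexity/compactness of $S_{xu}$, $X$ from Assumption \ref{asp:cpt}, and push forward through the continuous projection $\pi_{[1,n]}$. The only (cosmetic) difference is that the paper establishes closedness of the lifted set by a sequential argument, whereas you write it as an intersection over $d\in D$ of preimages of the closed set $X$ under continuous affine maps; both are valid, and your side remark that only closedness of $X$ (not compactness of $D$) is needed there is accurate.
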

\begin{proof}
A proof for similar results can be found in \cite{blanchini1994ultimate}. We provide a separate proof here for completeness.

Define $C_{xu} = \{(x,u)\in S_{xu} \mid Ax+Bu+ED \subseteq X\} $.  By definition of backward reachable set, $Pre_{\Sigma}(X, S_{xu}, D) = \pi_{[1,n]}(C_{xu})$. For now, we assume that $C_{xu}$ is compact and convex. Since projection $\pi_{[1,n]}(\cdot)$ is continuous and $C_{xu}$ is compact, $Pre_{\Sigma}(X, S_{xu}, D)$ is compact. Since the projection of a convex set is convex, $Pre_{\Sigma}(X, S_{xu}, D)$ is convex. It is left to show the convexity and compactness of $C_{xu}$.

We first show that $C_{ xu}$ is convex. Let $(x_1,u_1)$, $(x_2,u_2)$ be two points in $C_{xu}$ and $ \alpha$ be a constant in $(0,1)$. Denote $(x,u) = \alpha (x_1,u_1)+ (1- \alpha)(x_2,u_2)$. Since $S_{xu}$ is convex, $(x,u)$ is in $S_{xu}$. For any $d\in D$ and $i\in \{1,2\} $, $x_{i}^{+}=Ax_{i}+Bu_{i}+Ed\in X$. Since $X$ is convex, for the same $d\in D$, $Ax+Bu+Ed= \alpha x_{1}^{+} +(1- \alpha)x_2^{+} $ is in $X$. Thus, $(x,u)\in C_{xu}$.  That is, $C_{xu}$ is convex.

Next, we show that $C_{ xu}$ is compact. Let $\{(x_{n},u_{n})\}_{n=1}^{ \infty} $ be an arbitrary convergent sequence in $C_{xu}$. Suppose that $(x_{n},u_{n}) \rightarrow (x,u)$. Since $S_{xu}$ is compact, $(x,u)\in S_{xu}$. For any $d\in D$, $Ax_{n}+Bu_{n}+Ed \rightarrow Ax+Bu+Ed$. Since $X$ is compact and $Ax_{n}+Bu_{n}+Ed\in X$, $Ax+Bu+Ed\in X$. Since $(x,u)\in S_{xu}$ and $Ax+Bu+ED  \subseteq X$, $(x,u)\in C_{xu}$. Thus, $C_{xu}$ is closed. Since $C_{xu}$ is closed subset of the compact set $S_{xu}$, $C_{xu}$ is compact. 

\end{proof}

\begin{lemma} \label{lem:c_inf} 
    Suppose Assumption \ref{asp:cpt} holds. Let $C_0$ be a compact convex RCIS of $\Sigma$ in $S_{xu}$.  Define $C_{k} = Pre_{\Sigma}^{k}(C_0, S_{xu},D)$ and $C_{ \infty}= \cup_{k=1}^{ \infty} C_{k}$. Then, $C_{ \infty}$ satisfies the following properties:
	\begin{itemize}
		\item [(a)] $\{C_{k}\}_{k=1}^{ \infty} $ is an expanding family of compact convex RCISs;
		\item[(b)] $C_{ \infty}$ is  bounded and convex;
		\item[(c)] If $int(C_{ \infty})$ is nonempty, then there exists a finite $k\geq 0$ such that $int(C_{k})$ is nonempty;
		\item[(d)] For any compact subset $C$ contained in the interior of $C_{  \infty}$, there exists an $ \epsilon>0$ such that $C + B_{ \epsilon}(0) \subseteq C_{ \infty}$.
	\end{itemize}
\end{lemma}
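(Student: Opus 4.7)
The plan is to prove parts (a)--(d) in that order, with parts (a) and (b) following directly from earlier lemmas and standard convex analysis, and parts (c) and (d) reducing to finite-dimensional topology in $\R^{n}$.

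For part (a), I would argue by induction on $k$. The base case $C_1 = Pre_{\Sigma}(C_0, S_{xu}, D)$ is compact and convex by Lemma \ref{lem:pre_cpt}, and it is an RCIS because $C_0$ is (as noted after \eqref{eqn:rcis_pre}, the $Pre_{\Sigma}$ of an RCIS is an RCIS). The inductive step is identical: if $C_k$ is a compact convex RCIS, then so is $C_{k+1} = Pre_{\Sigma}(C_k, S_{xu}, D)$. The family is expanding because $C_0 \subseteq Pre_{\Sigma}(C_0, S_{xu}, D) = C_1$ by \eqref{eqn:rcis_pre}, and an induction using the obvious monotonicity of $Pre_{\Sigma}(\cdot, S_{xu}, D)$ in its first argument (immediate from Definition \ref{def:pre}) propagates this to all $k$. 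Part (b) is then immediate: each $C_k$ is contained in the compact set $\pi_{[1,n]}(S_{xu})$ (compact by Assumption \ref{asp:cpt}), so $C_{\infty}$ is bounded; and $C_{\infty}$ is the union of a nested family of convex sets, hence convex.

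For part (c), if $int(C_{\infty})$ is nonempty, pick any $x \in int(C_{\infty})$ and an $\epsilon > 0$ with $B_{\epsilon}(x) \subseteq C_{\infty}$. Then $B_{\epsilon}(x)$ contains $n+1$ affinely independent points $p_0, \ldots, p_n$ (a standard fact in $\R^{n}$). Each $p_i$ lies in some $C_{k_i}$ since $C_{\infty} = \cup_k C_k$. Taking $K = \max_i k_i$ and invoking the expanding property from part (a) places all the $p_i$ in $C_K$, and convexity of $C_K$ then forces the convex hull of $\{p_0, \ldots, p_n\}$, an $n$-simplex with nonempty interior in $\R^{n}$, to lie in $C_K$. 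Hence $int(C_K) \neq \emptyset$.

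For part (d), I would use a standard topological argument. The set $\R^{n} \setminus int(C_{\infty})$ is closed and disjoint from the compact $C$, so the distance $\delta = \inf\{ \Vert x-y \Vert_{2} : x \in C, \, y \in \R^{n} \setminus int(C_{\infty})\}$ is strictly positive. Any $\epsilon \in (0, \delta)$ then satisfies $C + B_{\epsilon}(0) \subseteq int(C_{\infty}) \subseteq C_{\infty}$, as required. I do not anticipate a major obstacle in this lemma; the main care is that (c) relies specifically on $\R^{n}$ being finite-dimensional so that a simplex on $n+1$ affinely independent points has nonempty interior, and that (d) uses compactness of $C$ crucially to upgrade a pointwise open-ball containment into a uniform thickening.
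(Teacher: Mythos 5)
Your proposal is correct and follows essentially the same route as the paper: induction with Lemma \ref{lem:pre_cpt} and monotonicity of $Pre_{\Sigma}$ for (a), boundedness via $\pi_{[1,n]}(S_{xu})$ and nestedness for (b), and a finite set of points with full-dimensional convex hull (your simplex, the paper's hypercube) for (c). For (d) you invoke the standard positive-distance fact between a compact set and a disjoint closed set, whereas the paper unpacks the same compactness argument as an explicit convergent-subsequence contradiction; the two are interchangeable.
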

\begin{proof}
	We first prove (a): Since $C_0$ is an RCIS in $S_{xu}$, by definition, $ { C_0 \subseteq Pre_{\Sigma}(C_0, S_{xu}, D) = C_1}$. Since $C_0$ is compact and convex, by Lemma \ref{lem:pre_cpt},   $C_{1}$ is compact and convex. Now suppose that $C_{k-1} \subseteq C_{k}$, and both $C_{k-1}$ and $C_{k}$ are compact and convex. Then, $C_{k} = Pre_{\Sigma}(C_{k-1},S_{xu},D) \subseteq Pre_{\Sigma}(C_{k}, S_{xu},D) = C_{k+1}$.  Since $C_{k}$ is compact and convex, by Lemma \ref{lem:pre_cpt}, $C_{k+1}$ is compact and convex. By induction, $\{C_{k}\}_{k=1}^{ \infty} $ is an expanding family of nonempty compact convex sets. Furthermore, since $C_{k} \subseteq C_{k+1} = Pre_{\Sigma}(C_{k}, S_{xu},D)$, $C_{k}$ is an RCIS of $\Sigma$ in $S_{xu}$ for all $k\geq 0$.

	(b) Since $S_{xu}$ is bounded and $C_{k} \subseteq \pi_{[1,n]} (S_{xu})$ for all $k \geq 0$, $C_{ \infty}$ is bounded. Let $x_1$ and $x_2$ be two points in $C_{ \infty}$. Since $\{C_{k}\}_{k=1}^{ \infty} $ is an expanding family of sets and $C_{ \infty} = \cup_{k=1}^{ \infty} C_{k}$, there exists $k_0$ such that $x_1, x_2 \in C_{k_0}$. According to (a), $C_{k_0}$ is convex. Thus, any convex combination of $x_1$ and $x_2$ is in $C_{k_0} \subseteq C_{ \infty}$. Hence, $C_{ \infty}$ is convex.

	(c) Since $int(C_{ \infty})$ is nonempty, we can fit a small hypercube in the interior of $C_{ \infty}$. By definition of $C_{ \infty}$, each vertex of this hypercube is contained by $C_{k}$ for some finite $k$. Since the hypbercube has finitely many vertices, there exists a finite $k_0$ such that $C_{k_0}$ contains all the vertices of the hypercube. Since $C_{k_0}$ is convex, $C_{k_0}$ contains the hypercube and thus has nonempty interior.

	(d) Suppose that for all $ \epsilon >0$, $ C + B_{ \epsilon}(0) \not \subseteq C_{ \infty}$. Then there exists $x_{n}\in C$ such that $B_{ 1/n}(x_{n}) \not \subseteq C_{ \infty}$. Since $C$ is compact, there exists a convergent subsequence $(x_{n_{i}})_{i=1}^{ \infty}$ such that $ x_{ n_{i}} \rightarrow x$ for some $x\in C$. Since $C$ is contained by the interior of $C_{ \infty}$, there exists a constant  $ \epsilon' > 0$ such that $B_{ \epsilon'}(x) \subseteq C_{ \infty}$. Since $n_{i} \rightarrow \infty	$ and $x_{n_{i}} \rightarrow x$, there exists $i'$ such that $1/n_{i'} < \epsilon'/2$ and $ \Vert x_{n_{i'}}-x\Vert_{2} < \epsilon'/2$. Thus, $B_{1/ n_{i'}}(x_{n_{i'}}) \subseteq B_{ \epsilon'}(x) \subseteq C_{ \infty}$. However, by construction, $B_{ 1/n_{i'}}(x_{n_{i'}})\not  \subseteq C_{ \infty}$, contradiction! Thus, there exists $ \epsilon > 0$ such that $C + B_{ \epsilon}(0) \subseteq C_{ \infty}$.
\end{proof}

\begin{proof}[\textbf{Proof of Theorem \ref{thm:C_inf}}]
If $C_{0}$ is contained in the interior of $C_{L_0}$ for some $L_0>0$, it is trivial that $C_0$ is contained in the interior of $C_{ \infty}$. Now suppose that $C_0$ is contained in the interior of $C_{ \infty}$. By Lemma \ref{lem:c_inf} (d), there exists $ \epsilon>0$ such that $cl(C_0+ B_{ \epsilon}(0)) \subseteq int(C_{ \infty})$.   Also, by Lemma \ref{lem:c_inf} (c), there exists $k_0$ such that  $int(C_{k_0})$ is nonempty. Thus by Lemma \ref{lem:h_conv}, there exists $k$  such that $C_{k} \supseteq cl(C_0+B_{ \epsilon}(0))$. For the same $k$, it is clear that  $C_0 \subseteq int(C_{k})$.
\end{proof}

\begin{proof}[\textbf{Proof of Theorem \ref{thm:C_max}}]
We first show that there exists a constant $ \alpha \in (0,1)$ such that $ \alpha C_{max} + (1- \alpha)C_0$ is contained by the interior of $C_{ \infty}$. Since $C_{0}$ is contained in the interior of $C_{ \infty}$, by Lemma \ref{lem:c_inf}, there exists $ \epsilon>0$ such that $C_0+ B_{ \epsilon}(0) \subseteq C_{ \infty}$. Then, $C_{0}+ B_{ \epsilon /2 }(0)$ is in the interior of $C_{ \infty}$. Since $S_{xu}$ is compact, $C_{max}$ is bounded, and thus there exists $ \alpha>0$ such that $ \alpha C_{max} \subseteq B_{ \epsilon/ 2}(0)$. Thus, $ C_0 + \alpha C_{max} \subseteq int(C_{ \infty})$. Since $0\in C_{0}$, $ (1- \alpha) C_{0}+ \alpha C_{max} \subseteq C_0+ \alpha C_{max} \subseteq int(C_{ \infty})$.

Next, it can be easily shown that for any RCIS $C$ in the compact convex safe set $S_{xu}$, the convex hull of $C$ and the closure of $C$ are also RCIS in $S_{xu}$ (a special case is proven by \cite[Proposition 20]{de2004computation}). Thus, the maximal RCIS $C_{max}$ is compact and convex. Since both $C_{max}$ and $C_0$ are compact and convex, $ \alpha C_{max} + (1- \alpha) C_0$ is compact and convex.   By Lemmas \ref{lem:h_conv} and \ref{lem:c_inf}, since $ \alpha C_{max}+ (1- \alpha) C_0$ is a closed subset of $int(C_{ \infty})$, there exists $k_0$ such that $\alpha C_{max}+ (1- \alpha) C_0 \subseteq C_{k_0}$.  

We denote $\overline{C}_{0} = \alpha C_{max}+ (1- \alpha) C_0$. Since $C_{max}$ and $C_0$ are RCISs, it is easy to check that $\overline{C}_{0}$ is also an RCIS in $S_{xu}$. Define $\overline{C}_{k} = Pre_{\Sigma}^{k}(\overline{C}_{0}, S_{xu}, D)$.  Then, by Lemmas \ref{lem:pre_decomp} and \ref{lem:pre_linear}, 
\begin{align}
	\overline{C}_{k}=	Pre_{\Sigma}^{k}(\overline{C}_{0}, S_{xu}, D) \supseteq &\alpha Pre_{\Sigma}^{k}( C_{max}, S_{xu}, D) +\nonumber\\
																  &(1- \alpha) Pre_{\Sigma}^{k}( C_0, S_{xu}, D)\\
	=& \alpha C_{max} + (1- \alpha) Pre_{\Sigma}^{k}( C_0, S_{xu}, D)\nonumber\\
	=& \alpha C_{max} + (1- \alpha)C_{k} 
\end{align}
Define $\overline{C}_{ \infty} = \cup_{k=0}^{ \infty} \overline{C}_{k}$. Then,
\begin{align}
	\overline{C}_{ \infty} &\supseteq \cup_{k=0}^{ \infty}(\alpha C_{max} + (1- \alpha)C_{k}) \\
	&= \alpha C_{max} + (1- \alpha) \cup_{k=0}^{ \infty} C_{k} \\
	&= \alpha C_{max} + (1- \alpha) C_{ \infty}. \label{eqn:pf_3_1} 
\end{align}

Since by construction of $k_0$,  $\overline{C}_{0} \subseteq C_{k_0}$,
\begin{align}
   \overline{C}_{ \infty} \subseteq \cup_{k=1}^{ \infty}Pre_{\Sigma}^{k}(C_{k_0}, S_{xu},D) = \cup_{k=1}^{ \infty}C_{k_0+k}= C_{ \infty}. \label{eqn:pf_3_2} 
\end{align}
Thus, by \eqref{eqn:pf_3_1} and \eqref{eqn:pf_3_2},  
\begin{align}
	&\alpha C_{max} + (1- \alpha) C_{ \infty} \subseteq \overline{C}_{ \infty} \subseteq C_{ \infty}\\
	&cl(\alpha C_{max} + (1- \alpha) C_{ \infty}) \subseteq cl(\overline{C}_{ \infty}) \subseteq cl(C_{ \infty}). \label{eqn:pf_3_3} 
\end{align}
Since $C_{max}$ and $cl(C_{ \infty})$ are compact, $\alpha C_{max} + (1- \alpha) cl(C_{ \infty}) $ is compact. Thus, $$ cl( \alpha C_{max} + (1- \alpha) C_{ \infty}) \subseteq \alpha C_{max} + (1- \alpha)cl(C_{ \infty}).$$ Let $x\in \alpha C_{max} + (1- \alpha) cl(C_{ \infty})$. Then, there exist points $x_1\in C_{max}$ and $x_{2}\in cl(C_{ \infty})$ such that  $x = \alpha x_1 + (1- \alpha) x_2$. Since $x_2 \in cl(C_{ \infty})$, there exists $\{x_{2,k}\}_{k=1}^{ \infty} \subseteq C_{ \infty}$ such that $x_{2,k} \rightarrow x_2$. Since $ \alpha x_1 + (1- \alpha) x_{2,k}\in \alpha C_{max}+(1- \alpha)C_{ \infty}$ and ${\alpha x_1 + (1- \alpha) x_{2,k} \rightarrow x}$, $x\in cl( \alpha C_{max} + (1- \alpha) C_{ \infty})$ and thus $ cl( \alpha C_{max} + (1- \alpha) C_{ \infty}) $ is equal to $ \alpha C_{max} + (1- \alpha)cl(C_{ \infty})$. 
Hence, \eqref{eqn:pf_3_3} implies 
\begin{align} \label{eqn:thm5}
    \alpha C_{max} + (1- \alpha) cl(C_{ \infty}) \subseteq cl(C_{ \infty}) = \alpha cl(C_{ \infty}) + (1- \alpha) cl(C_{ \infty}).
\end{align}
Since $ cl(C_{ \infty})$ is convex and compact, by order cancellation theorem \cite[Theorem 4]{grzybowski2020order}, \eqref{eqn:thm5} implies $ \alpha C_{max} \subseteq \alpha cl(C_{ \infty})$ and thus $C_{max} \subseteq cl(C_{ \infty})$. Also, since $C_{max}$ is the maximal RCIS, $C_{ \infty} \subseteq C_{ max}$. Thus, $C_{max} = cl(C_{ \infty})$.
\end{proof}

\begin{proof}[\textbf{Proof of Theorem \ref{thm:conv_inv_ext}}]
Let $\beta_{0}$ be the infimum $ \beta$ such that $ C_0 \subseteq \beta cl(C_{ \infty})$. We first want to show that $ \beta_0 < 1$. By Lemma \ref{lem:c_inf}, $cl(C_{ \infty})$ is a compact and convex set and there exists $ \epsilon>0$ such that $C_{0}+ B_{ \epsilon}(0) \subseteq cl(C_{ \infty})$. Since $cl(C_{ \infty})$ is bounded, there exists $ \alpha>0$ such that $ \alpha cl(C_{ \infty}) \subseteq B_{ \epsilon}(0)$. Thus, $C_{0} + \alpha cl(C_{ \infty}) \subseteq cl( C_{ \infty}) = \alpha cl(C_{ \infty}) + (1- \alpha) cl(C_{ \infty})$. Since $ \alpha cl(C_{ \infty})$ is a nonempty compact set and $ (1- \alpha) cl(C_{ \infty})$ is convex, by the order cancellation theorem (Theorem 4 in \cite{grzybowski2020order}), $C_0 \subseteq (1- \alpha) cl(C_{ \infty})$. Thus, $ \beta_0 \leq  1- \alpha < 1$.

	Pick $ \beta_1$ in $( \beta_0, 1)$. For now, let us assume that there exists a $N>0$  such that 
	\begin{align} \label{eqn:beta_1_ext} 
	   C_{N}= Pre_{\Sigma}^{N}(C_0,S_{xu},D) \supseteq \beta_1 cl(C_{ \infty}).
	\end{align} 
Then, 
   $ Pre_{\Sigma}^{N}( \beta_0 cl(C_{ \infty}),S_{xu},D) \supseteq \beta_1 cl(C_{ \infty})$.
Let $ \lambda = \beta_0/ \beta_1$. Then, $ \beta_1 cl(C_{ \infty})$ is an $N$-step $\lambda$-contractive set.

Now it is left to show that there exists a $N>0$  such that \eqref{eqn:beta_1_ext} holds. It is easy to show that (i) $\{C_{k}\}_{k=1}^{ \infty} $ is an expanding family of convex compact  sets. (ii) Since ${C_{ \infty} = \cup_{k=1}^{ \infty} C_{k}}$ contains $0$ in the interior, by Lemma \ref{lem:c_inf} (c),  there exists $k_0$ such that $ C_{k_0}$ contains $ 0$ in the interior. (iii) $C_{ \infty}$ is bounded since $S_{xu}$ is bounded.  Finally, since $cl(C_{ \infty})$ contains $0$ in the interior,  $cl(C_{ \infty}) = \beta_1 cl(C_{ \infty}) + (1- \beta_1)cl( C_{ \infty}) \supseteq \beta_1 cl(C_{ \infty}) + B_{ \epsilon}(0)$ for some small $ \epsilon$. Thus, $ \beta_1 cl(C_{ \infty})$ is a compact set in the interior $int(cl(C_{ \infty}))$. In the proof of Lemma \ref{lem:h_conv}, we have shown that $ int(cl(C_{ \infty})) = int(C_{ \infty})$. Thus, (iv) $ \beta_1 cl(C_{ \infty})$ is a compact set in the interior of $C_{ \infty}$. Then, by Lemma \ref{lem:h_conv}, there exists $N$ such that $C_{N} \supseteq \beta_1 cl(C_{ \infty})$.  
\end{proof}

\begin{proof}[\textbf{Proof of Theorem \ref{thm:conv_ext} }]
		Define 
	\begin{align}
	    \lambda_{0,1} =  \frac{ \lambda_0 - \lambda \gamma}{1- \lambda \gamma} ,\ 
		\lambda_{0,2} = \frac{ \lambda \gamma (1- \lambda_0)}{1-  \lambda\gamma}. 
	\end{align}
	It is easy to check that $ \lambda_{0,1}\geq 0$, $ \lambda_{0,2}\geq 0$, $ \lambda_0 = \lambda_{0,1} + \lambda_{0,2}$ and $ 1- \lambda_{0,1} = \lambda_{0,2}/( \lambda \gamma)$.  Thus, by Lemmas \ref{lem:pre_decomp_cvx} and \ref{lem:pre_linear}, we have
	\begin{align*}
	&Pre_{\Sigma}^{N}( \lambda_0 C, S_{xu},D)\\
		\supseteq & Pre_{\Sigma}^{N} ( \lambda_{0,1} C, \lambda_{0,1}S_{xu}, \lambda_{0,1}D ) + Pre_{\Sigma}^{N} ( \lambda_{0,2} C, \frac{ \lambda_{0,2}}{ \lambda \gamma} S_{xu}, \frac{ \lambda_{0,2}}{ \lambda \gamma} D)\\
		=&  \lambda_{0,1}Pre_{\Sigma}^{N} (C, S_{xu}, D ) + \frac{ \lambda_{0,2}}{ \lambda \gamma}  Pre_{\Sigma}^{N} ( \lambda \gamma C, S_{xu}, D)\\
		\supseteq & \lambda_{0,1} C + \frac{\lambda_{0,2}}{ \lambda \gamma} \gamma C = g( \lambda_0) C.
	\end{align*}
The inclusion in last row above holds since $ \gamma C$ is $N$-step $ \lambda$-contractive.
\end{proof}

\balance
\end{document}